\title{The Complexity of Abstract Machines}
\author{Beniamino Accattoli
\institute{INRIA \& LIX, \'Ecole Polytechnique}
\email{beniamino.accattoli@inria.fr}
}
\newcommand{\macrospath}{.}
    \newtheorem{theorem}{Theorem}[section]
    \newtheorem{lemma}[theorem]{Lemma}
    \newtheorem{proposition}[theorem]{Proposition}
    \newtheorem{definition}[theorem]{Definition}
\newcommand{\myproof}[1]{
\ifthenelse{\boolean{withproofs}}{#1}{}
}
\newcommand{\la}[1]{\lambda #1.}
\newcommand{\tm}{t}
\newcommand{\tmtwo}{s}
\newcommand{\tmthree}{u}
\newcommand{\tmfour}{r}
\newcommand{\var}{x}
\newcommand{\vartwo}{y}
\newcommand{\varthree}{z}
\newcommand{\nfpar}[2]{{\tt nf}_{#2}(#1)}
\newcommand{\Rew}[1]{\rightarrow_{#1}}
\renewcommand{\to}{\Rew{}}
\newcommand{\tob}{\Rew{\beta}}
\newcommand{\towh}{\Rew{wh}}
\newcommand{\tostrat}{\Rew{x}}
\newcommand{\csym}{{\tt c}}
\newcommand{\ctxholep}[1]{\langle #1\rangle}
\newcommand{\ctxhole}{\ctxholep{\cdot}}
\newcommand{\evctx}{E}
\newcommand{\evctxp}[1]{\evctx\ctxholep{#1}}
\newcommand{\nbvctxtwo}[1]{\nbvctxtwo{#1}}
\newcommand{\defeq}{:=}
\newcommand{\isub}[2]{\{#1/#2\}}
\newcommand{\esub}[2]{[#1/#2]}
\renewcommand{\esub}[2]{[#1{\shortleftarrow}#2]}
\renewcommand{\isub}[2]{\{#1{\shortleftarrow}#2\}}
\newcommand{\llbrace}{\{ \kern -0.27em \vert}
\newcommand{\rrbrace}{\vert \kern -0.27em \}}
\renewcommand{\l}{\lambda}
\newcommand{\ie}{{\em i.e.}\xspace}
\newcommand{\ih}{{\em i.h.}\xspace}
\newcommand{\ignore}[1]{}
\newcommand{\myinput}[1]{\ifthenelse{\boolean{withimages}}{\input{#1}}{}}
\newcommand{\reflemma}[1]{Lemma~\ref{l:#1}}
\newcommand{\reflemmap}[2]{Lemma~\ref{l:#1}.\ref{p:#1-#2}}
\newcommand{\refprop}[1]{Proposition~\ref{prop:#1}}
\newcommand{\refsect}[1]{Sect.~\ref{sect:#1}}
	\newcommand{\refeq}[1]{(\ref{eq:#1})}
\newcommand{\reffig}[1]{Fig.~\ref{fig:#1}}
\newcommand{\refdef}[1]{Definition~\ref{def:#1}}
\newcommand{\refpoint}[1]{Point~\ref{p:#1}}
\newcommand{\nat}{\mathbb{N}}
\newcommand{\size}[1]{|#1|}
\newcommand{\sizep}[2]{|#1|_{#2}}
\newcommand{\code}{\overline{\tm}}
\newcommand{\codetwo}{\overline{\tmtwo}}
\newcommand{\codethree}{\overline{\tmthree}}
\newcommand{\clos}{c}
\newcommand{\clostwo}{\clos'}
\newcommand{\env}{e}
\newcommand{\envtwo}{\env'}
\newcommand{\genv}{E}
\newcommand{\genvtwo}{\genv'}
\newcommand{\genvthree}{\genv''}
\newcommand{\stack}{\pi}
\newcommand{\state}{s}
\newcommand{\statetwo}{s'}
\newcommand{\exec}{\rho}
\newcommand{\exectwo}{\sigma}
\newcommand{\decode}[1]{\underline{#1}}
\newcommand{\decodep}[2]{\decode{#1}\ctxholep{#2}}
\newcommand{\stempty}{\epsilon}
\newcommand{\cons}{::}
\newcommand{\deriv}{\ensuremath{d}}
\newcommand{\rename}[1]{#1^\alpha}
\newcommand{\mach}{{\tt M}}
\newcommand{\admsym}{\csym}
\newcommand{\tomachhole}[1]{\leadsto_{#1}}
\newcommand{\tomach}{\tomachhole{}}
\newcommand{\tomachmone}{$\begin{array}{c|c|ccc|c|c}
		Code & Stack &  Env && Code & Stack & Env\tomachhole{\mulsym_1}}
\newcommand{\tomachc}{\to$\begin{array}{c|c|ccc|c|c}
		Code & Stack &  Env && Code & Stack & Envmachhole{\admsym}}
\newcommand{\statesearch}[2]{(#1,#2)}
\newcommand{\statemicro}[2]{(#1,#2)}
\newcommand{\statemam}[3]{(#1,#2,#3)}
\newcommand{\statekamclos}[2]{(#1,#2)}
\newcommand{\compil}[1]{#1^\circ}
\newcommand{\applsym}{@l}
\newcommand{\rootbetasym}{r\beta}
\newcommand{\varsym}{var}
\newcommand{\tomachine}{\tomachhole\mach}
\newcommand{\tomachrbeta}{\tomachhole{\rootbetasym}}
\newcommand{\tomachappl}{\tomachhole{\applsym}}
\newcommand{\tomachvar}{\tomachhole{\varsym}}
\newcommand{\sizepr}[1]{\sizep{#1}{\beta}}
\newcommand{\sizebeta}[1]{\sizepr{#1}}
\newcommand{\sizeappl}[1]{\sizep{#1}{\applsym}}
\newcommand{\sizevar}[1]{\sizep{#1}{\varsym}}
\renewcommand{\tmtwo}{u}
\renewcommand{\tmthree}{r}
\renewcommand{\tmfour}{p}
\begin{document}
\maketitle

\renewcommand{\tostrat}{\to}

\begin{abstract}
The $\l$-calculus is a peculiar computational model whose definition does not come with a notion of machine. Unsurprisingly, implementations of the $\l$-calculus have been studied for decades. Abstract machines are implementations schema for fixed evaluation
strategies that are a compromise between theory and practice: they are concrete enough to provide a notion of machine and abstract enough to avoid the many intricacies of actual implementations. There is an extensive literature about abstract machines for the $\l$-calculus, and yet---quite mysteriously---the efficiency of these machines with respect to the strategy that they implement has almost never been studied.

This paper provides an unusual introduction to abstract machines, based on the complexity of their overhead with respect to the length of the implemented strategies. It is conceived to be a tutorial, focusing on the case study of implementing the weak head (call-by-name) strategy, and yet it is an original re-elaboration of known results. Moreover, some of the observation contained here never appeared in print before.
\end{abstract}

\section{Cost Models \& Size-Explosion}
The $\l$-calculus is an undeniably elegant computational model. Its definition is given by three constructors and only one computational rule, and yet it is Turing-complete. A charming feature is that it does not rest on any notion of machine or automaton. The catch, however, is that its cost model are far from being evident. What should be taken as time and space measures for the $\l$-calculus? The natural answers are the number of computational steps (for time) and the maximum size of the terms involved in a computation (for space). Everyone having played with the $\l$-calculus would immediately point out a problem: the $\l$-calculus is a nondeterministic system where the number of steps depends much on the evaluation strategy, so much that some strategies may diverge when others provide a result (but fortunately the result, if any, does not depend on the strategy). While this is certainly an issue to address, it is not the serious one. The big deal is called \emph{size-explosion}, and it affects all evaluation strategies.

\paragraph{Size-Explosion.} There are families of terms where the size of the $n$-th term is linear in $n$, evaluation takes a linear number of steps, but the size of the result is exponential in $n$. Therefore, the number of steps does not even account for the time to write down the result, and thus at first sight it does not look as a reasonable cost model. Let's see examples.

The simplest one is a variation over the famous looping $\l$-term $\Omega \defeq (\la\var \var\var) (\la\var \var\var) \tob \Omega \tob \ldots$. In $\Omega$ there is an infinite sequence of duplications. In the first size-exploding family there is a sequence of $n$ nested duplications. We define both the family $\{\tm_n\}_{n\in\nat}$ of size-exploding terms and the family $\{\tmtwo_n\}_{n\in\nat}$ of results of the evaluation

\begin{center}
$\begin{array}{cccccccc}
  \tm_0 & \defeq & \vartwo &&& \tmtwo_0 & \defeq & \vartwo\\
  \tm_{n+1} & \defeq & (\la\var \var\var) \tm_n &&&   \tmtwo_{n+1} & \defeq & \tmtwo_n \tmtwo_n
\end{array}$  
\end{center}

We use $\size\tm$ for the size of a term, \ie the number of symbols to write it, and say that a term is \emph{neutral} if it is normal and it is not an abstraction. 

\begin{proposition}[Open and Rightmost-Innermost Size-Explosion]
Let $n\in \nat$. Then $\tm_n \tob^n \tmtwo_n$, moreover $\size{\tm_n} = O(n)$, $\size{\tmtwo_n} = \Omega(2^n)$, and $\tmtwo_n$ is neutral.
\end{proposition}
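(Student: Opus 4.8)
The plan is to establish all four assertions at once by a single induction on $n\in\nat$, following the recursive definitions of $\tm_n$ and $\tmtwo_n$; bundling them is not merely a matter of taste, since the reduction claim for $n+1$ will rely on the neutrality claim for $n$. The base case $n=0$ is immediate: $\tm_0 = \vartwo = \tmtwo_0$, the empty sequence witnesses $\tm_0 \tob^0 \tmtwo_0$, a variable is normal and not an abstraction hence neutral, and $\size{\tm_0} = \size{\tmtwo_0} = 1$, which trivially initializes the size recurrences below.

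For the inductive step, assume $\tm_n \tob^n \tmtwo_n$ with $\tmtwo_n$ neutral. Since $\tm_{n+1} = (\la\var \var\var)\,\tm_n$ and $\tob$ is closed under applicative contexts, the inductive reduction lifts to $(\la\var \var\var)\,\tm_n \tob^n (\la\var \var\var)\,\tmtwo_n$; now fire the head $\beta$-redex, $(\la\var \var\var)\,\tmtwo_n \tob \tmtwo_n\tmtwo_n = \tmtwo_{n+1}$, for a total of $n+1$ steps. Neutrality is preserved: $\tmtwo_{n+1} = \tmtwo_n\tmtwo_n$ is an application whose left side is, by the induction hypothesis, normal and not an abstraction, and whose right side is normal, so $\tmtwo_{n+1}$ contains no redex and is itself not an abstraction, \ie it is neutral. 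This also pins down the discipline named in the statement: each of the first $n$ steps is rightmost-innermost inside the argument (by induction), and once the argument has become the normal term $\tmtwo_n$ the head redex is the unique---hence rightmost and innermost---redex left; moreover every term in the sequence mentions the free variable $\vartwo$, whence ``open''.

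For the sizes, reading off the definitions gives the recurrences $\size{\tm_{n+1}} = \size{\tm_n} + c$, with $c \defeq \size{(\la\var \var\var)} + 1$ a constant, and $\size{\tmtwo_{n+1}} = 2\,\size{\tmtwo_n} + 1$ with $\size{\tmtwo_0} = 1$. Solving them yields $\size{\tm_n} = 1 + cn = O(n)$ and $\size{\tmtwo_n} = 2^{n+1} - 1 = \Omega(2^n)$, as required.

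There is no deep obstacle here: the argument is a routine induction. The only points that deserve care are that $\tob$ must be read as unrestricted $\beta$-reduction---so that contracting inside the argument $\tm_n$ before the head redex is legitimate---and that the ``rightmost-innermost'' label for the last step is precisely what the neutrality part of the induction hypothesis buys us, which is the reason the four claims have to be proved together in one statement rather than in isolation.
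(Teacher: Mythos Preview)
Your proof is correct and follows essentially the same approach as the paper's: induction on $n$, lifting the inductive reduction sequence into the argument position of $(\la\var\var\var)\,\tm_n$ and then firing the head redex, with neutrality and the size bounds checked alongside. The paper's version is merely terser, declaring the base case, the size bounds, and neutrality ``immediate'' where you spell out the recurrences and the role of neutrality in justifying the rightmost-innermost discipline.
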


\begin{proof}
By induction on $n$. The base case is immediate. The inductive case: $\tm_{n+1} = (\la\var \var\var) \tm_n \tob^n (\la\var \var\var) \tmtwo_n \tob \tmtwo_n \tmtwo_n = \tmtwo_{n+1}$, where the first sequence is obtained by the \ih The bounds on the sizes are immediate, as well as the fact that $\tmtwo_{n+1}$ is neutral.
\end{proof}

\paragraph{Strategy-Independent Size-Explosion.} The example relies on rightmost-innermost evaluation (\ie the strategy that repeatedly selects the rightmost-innermost $\beta$-redex) and open terms (the free variable $\tm_0 = \vartwo$). In fact, evaluating the same family in a leftmost-outermost way would produce an exponentially long evaluation sequence. One may then believe that size-explosion is a by-product of a clumsy choice for the evaluation strategy. Unfortunately, this is not the case. It is not hard to modify the example as to make it strategy-independent, and it is also easy to get rid of open terms. Let the identity combinator be $I \defeq \la\varthree \varthree$ (it can in fact be replaced by any closed abstraction). Define
\begin{align*}
\tmthree_1 & \defeq \la{\var}\la{\vartwo}(\vartwo \var \var)  & \tmfour_0 & \defeq I \\
\tmthree_{n+1} & \defeq \la{\var}(\tmthree_n (\la{\vartwo}(\vartwo \var \var))) & \tmfour_{n+1} & \defeq \la{\vartwo}(\vartwo \tmfour_{n} \tmfour_{n})	
\end{align*}

The size-exploding family is $\{\tmthree_n I\}_{n\in\nat}$, \ie it is obtained by applying $\tmthree_n$ to the identity $I = \tmfour_0$. The statement we are going to prove is in fact more general, about $\tmthree_n \tmfour_m$ instead of just $\tmthree_n I$, in order to obtain a simple inductive proof.

\begin{proposition}[Closed and Strategy-Independent Size-Explosion]
\label{prop:abs-size-explosion}
  Let $n \!>\! 0$. Then $\tmthree_n \tmfour_m \tob^n  \tmfour_{n+m}$, and in particular $\tmthree_n I \tob^n  \tmfour_n$. Moreover, $\size{\tmthree_n I} = O(n)$, $\size{\tmfour_n} = \Omega(2^n)$, $\tmthree_n I$ is closed, and $\tmfour_n$ is normal.
\end{proposition}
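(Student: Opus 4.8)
The plan is to prove the reduction statement $\tmthree_n \tmfour_m \tob^n \tmfour_{n+m}$ by induction on $n$, keeping the extra parameter $m$ (exactly as in the statement) so that the induction hypothesis is strong enough to close the step; the claim $\tmthree_n I \tob^n \tmfour_n$ is then the instance $m = 0$. For the base case $n = 1$ I would simply fire the outermost redex: $\tmthree_1 \tmfour_m = (\la{\var}\la{\vartwo}(\vartwo\var\var))\,\tmfour_m \tob \la{\vartwo}(\vartwo\,\tmfour_m\,\tmfour_m) = \tmfour_{m+1}$, one step. For the inductive step, note that $\var$ does not occur free in $\tmthree_n$ and that $\tmfour_m$ is closed (a trivial induction), so contracting the head redex of $\tmthree_{n+1}\tmfour_m = (\la{\var}(\tmthree_n(\la{\vartwo}(\vartwo\var\var))))\,\tmfour_m$ only rewrites the subterm $\la{\vartwo}(\vartwo\var\var)$ and creates no capture, giving $\tmthree_n(\la{\vartwo}(\vartwo\,\tmfour_m\,\tmfour_m)) = \tmthree_n \tmfour_{m+1}$; feeding this to the \ih yields $\tmthree_n \tmfour_{m+1} \tob^n \tmfour_{n+1+m}$, for a total of $n+1$ steps, as wanted.

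The remaining four assertions are each a one-line induction. For closedness: $\tmthree_1$ is closed, and in $\tmthree_{n+1} = \la{\var}(\tmthree_n(\la{\vartwo}(\vartwo\var\var)))$ the binder $\la{\var}$ closes the only free variable $\var$ of $\la{\vartwo}(\vartwo\var\var)$ while $\tmthree_n$ is closed by the \ih; since $I$ is closed, so is $\tmthree_n I$. For normality of $\tmfour_n$: $\tmfour_0 = I$ is normal, and if $\tmfour_n$ is normal then $\vartwo\,\tmfour_n\,\tmfour_n$ is a variable applied to normal terms, hence neutral, so $\tmfour_{n+1} = \la{\vartwo}(\vartwo\,\tmfour_n\,\tmfour_n)$ is a normal abstraction. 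For the sizes: $\size{\tmthree_{n+1}} = \size{\tmthree_n} + O(1)$ gives $\size{\tmthree_n I} = O(n)$, while $\size{\tmfour_{n+1}} \geq 2\,\size{\tmfour_n}$ gives $\size{\tmfour_n} \geq 2^n\,\size{I} = \Omega(2^n)$.

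Finally, even though the formal statement only asserts the existence of an $n$-step reduction, I would add a remark making the adjective \emph{strategy-independent} precise: every contraction performed along the sequence above duplicates only the normal term $\tmfour_m$ or the redex-free abstraction $\la{\vartwo}(\vartwo\var\var)$, and erases nothing, so no reduction from $\tmthree_n I$ ever duplicates or erases a redex; by the standard fact that such reductions all have the same length, every normalizing strategy reaches the common normal form $\tmfour_n$ in exactly $n$ steps, passing through a term of size $\Omega(2^n)$. I do not expect a genuine obstacle anywhere: the only points needing a little care are stating the induction over $m$ so that the \ih applies to $\tmthree_n\tmfour_{m+1}$, and — if one wants strategy-independence fully rigorous rather than a remark — invoking the uniform-normalization argument just sketched.
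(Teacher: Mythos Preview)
Your proof is correct and follows exactly the same route as the paper: induction on $n$ with $m$ kept as a parameter, the base case $n=1$ by firing the single redex, and the inductive step by contracting the head redex to reach $\tmthree_n\tmfour_{m+1}$ and applying the induction hypothesis. The paper dismisses the remaining assertions (closedness, normality, size bounds) as immediate, whereas you spell them out, and your added remark on strategy-independence makes explicit what the paper only comments on informally after the proposition.
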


\begin{proof}
By induction on $n$. The base case: $\tmthree_1 \tmfour_m = \la{\var}\la{\vartwo}(\vartwo \var \var) \tmfour_m \tob (\la{\vartwo}(\vartwo \tmfour_m \tmfour_m)) = \tmfour_{m+1}$. The inductive case: $\tmthree_{n+1} \tmfour_m = \la{\var}(\tmthree_n (\la{\vartwo}(\vartwo \var \var))) \tmfour_m  \tob \tmthree_n (\la{\vartwo}(\vartwo \tmfour_m \tmfour_m)) = \tmthree_n \tmfour_{m+1} \tob^n \tmfour_{n+m+1}$, where the second sequence is obtained by the \ih The rest of the statement is immediate.
\end{proof}

The family $\{\tmthree_n I\}_{n\in\nat}$ is interesting because no matter how one looks at it, it always explodes: if evaluation is weak (\ie it does not go under abstraction) there is only one possible derivation to normal form and if it is strong (\ie unrestricted) all derivations have the same length (and are permutatively equivalent). To our knowledge this family never appeared in print before.
\section{\texorpdfstring{The $\l$-Calculus is Reasonable, Indeed}{The Lambda-Calculus is Reasonable, Indeed}}
\label{sect:indeed}
Surprisingly, the isolation and the systematic study of the size-explosion problem is quite recent---there is no trace of it in the classic books on the $\l$-calculus, nor in any course notes we are aware of. Its essence, nonetheless, has been widespread folklore for a long time: in practice, functional languages never implement full $\beta$-reduction, considered a costly operation, and theoretically the  $\l$-calculus is usually considered a model not suited for complexity analyses.


A way out of the issue of cost models for the $\l$-calculus, at first sight, is to take the time and space required for the execution of a $\l$-term in a fixed implementation. There is however no canonical implementation. The design of an implementation in fact rests on a number of choices. Consequently, there are a number of different but more or less equivalent machines taking a different number of steps and using different amounts of space to evaluate a term. Fixing one of them would be arbitrary, and, most importantly, would betray the machine-independent spirit of the $\l$-calculus.

\paragraph{Micro-Step Operational Semantics.} Luckily, the size-explosion problem can be solved in a machine-independent way. Somewhat counterintuitively, in fact, the number of $\beta$-steps can be taken as a reasonable cost model. The basic idea is simple: one has to step out of the $\l$-calculus, by switching to a different setting that \emph{mimics} $\beta$-reduction without literally doing it, acting on \emph{compact representations} of terms to avoid size-explosion. Essentially, the recipe requires four ingredients:
\begin{enumerate}
  \item \emph{Statics}: $\l$-terms are refined with a form of \emph{sharing} of subterms;
  \item \emph{Dynamics}: evaluation has to manipulate terms with sharing via \emph{micro}-operations;
  \item \emph{Cost}: these micro-step operations have constant cost;
  \item \emph{Result}: micro-evaluation stops on a \emph{shared representation of the result}.
\end{enumerate}
The recipe leaves also some space for improvisation: $\l$-calculus can in fact be enriched with \emph{first-class sharing} in various ways. Mainly, there are three approaches: \emph{abstract machines}, \emph{explicit substitutions}, and \emph{graph rewriting}. They differ in the details but not in the essence---they can be grouped together under the slogan \emph{micro-step operational semantics}.

\paragraph{Reasonable Strategies.} An evaluation strategy $\tostrat$ for the $\l$-calculus is \emph{reasonable} if there is a micro-step operational semantics $M$ mimicking $\tostrat$ and such that the number of micro-steps to evaluate a term $\tm$ is polynomial in the number of $\tostrat$-steps to evaluate $\tm$ (and in the size of $\tm$, we will come back to this point later on). If a strategy $\tostrat$ is reasonable then its length is a reasonable cost model, despite size-explosion: the idea is that the $\l$-calculus is kept as an \emph{abstract} model, easy to define and reason about, while complexity-concerned evaluation is meant to be performed at the more sophisticated micro-step level, where the explosion cannot happen.

Of course, the design of a reasonable micro-step operational semantics depends much on the strategy and the chosen flavor of micro-steps semantics, and it can be far from easy. For \emph{weak} strategies---used to model functional programming languages---reasonable micro-steps semantics are based on a simple form of sharing. The first result about reasonable strategies was obtained by Blelloch and Greiner in 1995 \cite{DBLP:conf/fpca/BlellochG95} and concerns indeed a weak strategy, namely the call-by-value one. At the micro-step level it relies on abstract machines. Similar results were then proved again, independently, by Sands, Gustavsson, and Moran in 2002 \cite{DBLP:conf/birthday/SandsGM02} and by Dal Lago and Martini in 2006 \cite{DBLP:conf/cie/LagoM06}. For \emph{strong} strategies---at work in proof assistant engines---quite more effort and care are required. A sophisticated second-level of sharing, called \emph{useful sharing}, is necessary to obtain reasonable micro-step semantics for strong evaluation. The first such semantics has been introduced by Accattoli and Dal Lago in 2014 \cite{DBLP:conf/csl/AccattoliL14} for the leftmost-outermost strategy, and its study is still ongoing \cite{DBLP:conf/lics/AccattoliC15,DBLP:conf/wollic/Accattoli16}.

\paragraph{The Complexity of Abstract Machines.} To sum up, various techniques, among which abstract machines, can be used to prove that the number of $\beta$-steps is a reasonable time cost model, \ie a metric for time complexity. The study can then be reversed, exploring how to use this metric to study the relative complexity of abstract machines, that is, the complexity of the overhead of the machine with respect to the number of $\beta$-steps. Such a study leads to a new quantitative theory of abstract machines, where machines can be compared and the value of different design choices can be measured. The rest of the paper provides a gentle introduction to the basic concepts of the new complexity-aware theory of abstract machines being developed by the author in joint works \cite{DBLP:conf/icfp/AccattoliBM14,DBLP:conf/wollic/AccattoliC14,DBLP:conf/aplas/AccattoliBM15,DBLP:conf/lics/AccattoliC15,DBLP:conf/wollic/Accattoli16} with Damiano Mazza, Pablo Barenbaum, and Claudio Sacerdoti Coen, and resting on tools and concepts developed beforehand in collaborations with Delia Kesner \cite{DBLP:conf/csl/AccattoliK10} and Ugo Dal Lago \cite{DBLP:conf/rta/AccattoliL12}, as well as Kesner plus Eduardo Bonelli and Carlos Lombardi \cite{DBLP:conf/popl/AccattoliBKL14}.

\paragraph{Case Study: Weak Head Strategy.} The paper focuses on a case study, the weak head (call-by-name) strategy, also known as weak head reduction (we use \emph{reduction} and \emph{strategy} as synonymous, and prefer \emph{strategy}), and defined as follows:
 \begin{equation}
\begin{array}{c@{\hspace{2em}}ccccc}
		\AxiomC{}
	\RightLabel{{\tiny (root $\beta$)}}
	\UnaryInfC{$(\la\var\tm)\tmtwo \towh \tm\isub\var\tmtwo$}
	\DisplayProof
	&
	\AxiomC{$\tm \towh \tmtwo$}
	\RightLabel{{\scriptsize (@l)}}
	\UnaryInfC{$\tm\tmthree \towh \tmtwo\tmthree$}
	\DisplayProof
\end{array}
\label{eq:cbn-strategy}
\end{equation}
This is probably the simplest possible evaluation strategy. Of course, it is deterministic. Let us mention two other ways of defining it, as they will be useful in the sequel. First, the given inductive definition can be unfolded into a single synthetic rule $(\la\var\tm)\tmtwo \tmthree_1 \ldots \tmthree_k\towh \tm\isub\var\tmtwo \tmthree_1 \ldots \tmthree_k$. Second, the strategy can be given via evaluation contexts: define $\evctx \defeq \ctxhole \mid \evctx \tmthree$ and define $\towh$ as $\evctxp{(\la\var\tm)\tmtwo} \towh \evctxp{\tm\isub\var\tmtwo}$ (where $\evctxp\tm$ is the operation of plugging $\tm$ in the context $\evctx$, consisting in replacing the hole $\ctxhole$ with $\tm$).

Sometimes, to stress the modularity of the reasoning, we will abstract the weak head strategy into a generic strategy $\tostrat$. Last, a \emph{derivation} is a possibly empty sequence of rewriting steps.
\section{Introducing Abstract Machines}
\label{sect:machines-intro}
\paragraph{Tasks of Abstract Machines.} An abstract machine is an implementation schema for an evaluation strategy $\tostrat$ with sufficiently atomic operations and without too many details. A machine for $\tostrat$ accounts for 3 tasks:
\begin{enumerate}
  \item \emph{Search}: searching for $\tostrat$-redexes;
  \item \emph{Substitution}: replace meta-level substitution with an approximation based on sharing;
  \item \emph{Names}: take care of $\alpha$-equivalence.
\end{enumerate}

\paragraph{Dissecting Abstract Machines.} To guide the reader through the different concepts to design and analyze abstract machines, the next two subsections describe in detail two toy machines that address in isolation the first two mentioned tasks, \emph{search} and \emph{substitution}. They will then be merged into the Milner Abstract Machine (MAM). In \refsect{intro-complexity} we will analyze the complexity of the MAM. Next, we will address \emph{names} and describe the Krivine Abstract Machine, and quickly study its complexity. 

\paragraph{Abstract Machines Glossary.} 
\begin{itemize}
\item An abstract machine $\mach$ is given by \emph{states}, noted $\state$, and \emph{transitions} between them, noted $\tomach$;
\item A state is given by the \emph{code under evaluation} plus some \emph{data-structures} to implement \emph{search} and \emph{substitution}, and to take care of \emph{names};
\item The code under evaluation, as well as the other pieces of code scattered in the data-structures, are $\l$-terms \emph{not considered modulo $\alpha$-equivalence};
\item Codes are over-lined, to stress the different treatment of $\alpha$-equivalence;
\item A code $\code$ is \emph{well-named} if $\var$ may occur only in $\codetwo$ (if at all) for every sub-code $\la\var\codetwo$ of $\code$;
\item A state $\state$  is \emph{initial} if its code is well-named and its data-structures are empty; 
\item Therefore, there is a bijection $\compil\cdot$ (up to $\alpha$) between terms and initial states, called \emph{compilation}, sending a term $\tm$ on the initial state $\compil\tm$ on a well-named code $\alpha$-equivalent to $\tm$;
\item An \emph{execution} is a (potentially empty) sequence of transitions $\statetwo \tomach^* \state$ from an initial state $\statetwo$ obtained by compiling a(n initial) term $\tm_0$;
\item A state $\state$  is \emph{reachable} if it can be obtained as the end state of an execution;
\item A state $\state$ is \emph{final} if it is reachable and no transitions apply to $\state$.
\item A machine comes with a map $\decode\cdot$ from states to terms, called \emph{decoding}, that on initial states is the inverse (up to $\alpha$) of compilation;

\item A machine $\mach$ has a set of \emph{$\beta$-transitions} that are meant to be mapped to $\beta$-redexes (and whose name involves $\beta$) by the decoding, while the remaining \emph{overhead transitions} are mapped on equalities;
\item We use $\size\exec$ for the length of an execution $\exec$, and $\sizepr\exec$ for the number of $\beta$-transitions in $\exec$.
\end{itemize}

\paragraph{Implementations.} For every machine one has to prove that it correctly implements the strategy it was conceived for. Our notion, tuned towards complexity analyses, requires a perfect match between the number of $\beta$-steps of the strategy and the number of $\beta$-transitions of the machine execution.

\begin{definition}[Machine Implementation]
\label{def:implem}
A machine $\mach$ \emph{implements a strategy} $\tostrat$ on $\lambda$-terms when given a $\l$-term $\tm$ the following holds
\begin{enumerate}
\item \emph{Executions to Derivations}: for any $\mach$-execution $\exec: \compil\tm \tomachine^* \state$ there exists a $\tostrat$-derivation $\deriv: \tm \tostrat^* \decode\state$.

\item \emph{Derivations to Executions}: for every $\tostrat$-derivation $\deriv: \tm \towh^* \tmtwo$ there exists a $\mach$-execution $\exec: \compil\tm \tomachine^* \state$ such that $\decode\state = \tmtwo$.

\item \emph{$\beta$-Matching}: in both previous points the number $\sizepr\exec$ of $\beta$-transitions in $\exec$ is exactly the length $\size\deriv$ of the derivation $\deriv$, \ie $\size\deriv = \sizepr\exec$.
\end{enumerate}
\end{definition}

Note that if a machine implements a strategy than the two are \emph{weakly bisimilar}, where weakness is given by the fact that overhead transitions do not have an equivalent on the calculus (hence their name). Let us point out, moreover, that the $\beta$-matching requirement in our notion of implementation is unusual but perfectly reasonable, as all abstract machines we are aware of do satisfy it.
\section{The Searching Abstract Machine}
Strategies are usually specified through inductive rules as those in \refeq{cbn-strategy}. The inductive rules incorporate in the definition the search for the next redex to reduce. Abstract machines make such a search explicit and actually ensure two related subtasks:
\begin{enumerate}
  \item Store the current evaluation context in appropriate \emph{data-structures}.
  \item Search \emph{incrementally}, exploiting previous searches.
\end{enumerate}
For weak head reduction the search mechanism is basic. The data structure is simply a stack $\stack$ storing the arguments of the current head subterm. 
 
\paragraph{Searching Abstract Machine.} The searching abstract machine (Searching AM) in \reffig{searching-am} has two components, the \emph{code} in evaluation position and the \emph{argument stack}.
 The machine has only two transitions, corresponding to the rules in \refeq{cbn-strategy}, one $\beta$-transition ($\tomachhole{r\beta}$) dealing with $\beta$-redexes in evaluation position and one overhead transition ($\tomachhole{@l}$) adding a term on the argument stack.
 \begin{figure}

   \begin{center}
   \begin{tabular}{c}
   \begin{tabular}{c|c}
   $\begin{array}{rrclc}
     \mbox{Stacks} & \stack & \defeq & \stempty \mid \code \cons \stack\\
     \mbox{Compilation} & \compil\tm & \defeq & \statesearch \code \stempty\\

   \end{array}$
   &
   $\begin{array}{rrclc}
     \mbox{Decoding} & \decode{ \statesearch \code \stempty } & \defeq & \tm\\
      & \decode{ \statesearch \code {\codetwo \cons \stack} } & \defeq & \decode{ \statesearch {\code \codetwo} \stack }\\

   \end{array}$
   \end{tabular}
\\\\
   $\begin{array}{|c|c||c||c|c|}
		\hline
		Code & Stack&Trans.& Code & Stack
		
		\\
		\hhline{=|=||=||=|=}		
		\code\codetwo & \stack
	  	&\tomachappl&
	  	\code & \codetwo\cons\stack\\
		
		\hhline{-|-||-||-|-}
						
		\la\var\code & \codetwo\cons\stack
	  	&\tomachrbeta &
		\code\isub\var{\codetwo} & \stack\\
		\hline
  	\end{array}
	$
   \end{tabular}   
  \end{center}
  \caption{Searching Abstract Machine (Searching AM).}
     \label{fig:searching-am}
\end{figure}
Compilation of a (well-named) term $\tm$ into a machine state simply sends $\tm$ to the \emph{initial state} $\statesearch\code\stempty$. The decoding given in \reffig{searching-am} is defined inductively on the structure of states. It can equivalently be given  contextually, by associating an evaluation context to the data structures---in our case sending the argument stack $\stack$ to a context $\decode\stack$ by setting $\decode{\stempty }  \defeq  \ctxhole$, $\decode{ \codetwo \cons \stack}  \defeq  \decode{ \stack}\ctxholep{\ctxhole \tmtwo}$,  and $\decode{ \statesearch \code \stack }  \defeq  \decodep \stack \tm$. It is useful to have both definitions since sometimes one is more convenient than the other.

\paragraph{Implementation.} We now show the implementation theorem for the Searching AM with respect to the weak head strategy. Despite the simplicity of the machine, we provide a quite accurate account of the proof of the theorem, to be taken as a modular recipe. The proofs of the other implementation theorems in the paper will then be omitted as they follow exactly the same structure, \emph{mutatis mutandis}. 

The \emph{executions-to-derivations} part of the implementation theorem always rests on a lemma about the decoding of transitions, that in our case takes the following form.

\begin{lemma}[Transitions Decoding]
\label{l:searching-am-one-step} 
Let $\state$ be a Searching AM state.
\begin{enumerate}
\item \label{p:searching-am-one-step-principal}
\emph{$\beta$-Transition}: if $\state \tomachrbeta \statetwo$ then $\decode\state \tob \decode\statetwo$.
\item \label{p:searching-am-one-step-commutative}
\emph{Overhead Transition}: if $\state \tomachappl \statetwo$ then $\decode\state = \decode\statetwo$.
\end{enumerate}
\end{lemma}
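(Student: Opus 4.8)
The plan is to route everything through the \emph{contextual} presentation of the decoding given right after \reffig{searching-am}, which turns both items into one-line congruence arguments. First I would record the bridge between the two decodings: by an easy induction on the stack $\stack$ one checks that $\decode{\statesearch{\code}{\stack}} = \decodep{\stack}{\code}$ and, simultaneously, that $\decode\stack$ is always an evaluation context of the grammar $\evctx$ (it is $\ctxhole$ when $\stack=\stempty$, and $\decode{\stack}\ctxholep{\ctxhole\codetwo}$ when $\stack = \codetwo\cons\stack$). This is the only inductive ingredient of the proof; given it, each of the two transition rules is dispatched by closure of $=$, respectively of $\tob$, under evaluation contexts.

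For the overhead transition (item~\ref{p:searching-am-one-step-commutative}) the step has the shape $\statesearch{\code\codetwo}{\stack} \tomachappl \statesearch{\code}{\codetwo\cons\stack}$. Unfolding the contextual decoding on the target, $\decode{\statesearch{\code}{\codetwo\cons\stack}} = \decodep{\codetwo\cons\stack}{\code} = \decodep{\stack}{\code\codetwo} = \decode{\statesearch{\code\codetwo}{\stack}}$, so both states decode to literally the same term; this is in fact nothing but the second defining clause of the \emph{inductive} decoding, so strictly speaking there is nothing to prove.

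For the $\beta$-transition (item~\ref{p:searching-am-one-step-principal}) the step has the shape $\statesearch{\la\var\code}{\codetwo\cons\stack} \tomachrbeta \statesearch{\code\isub\var{\codetwo}}{\stack}$. Decoding the source contextually gives $\decode{\statesearch{\la\var\code}{\codetwo\cons\stack}} = \decodep{\codetwo\cons\stack}{\la\var\code} = \decodep{\stack}{(\la\var\code)\codetwo}$, \ie the decoded term is the root redex $(\la\var\code)\codetwo$ plugged into the evaluation context $\decode\stack$. Since $(\la\var\code)\codetwo \tob \code\isub\var{\codetwo}$ by the root-$\beta$ rule and $\tob$ is a congruence (closed under arbitrary contexts, in particular under $\decode\stack$), we obtain $\decodep{\stack}{(\la\var\code)\codetwo} \tob \decodep{\stack}{\code\isub\var{\codetwo}} = \decode{\statesearch{\code\isub\var{\codetwo}}{\stack}}$, which is exactly the claim. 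One actually gets the sharper $\decode\state \towh \decode\statetwo$, because $\decode\stack$ is even a weak head evaluation context; only the weaker $\tob$ stated here is needed for this lemma, but the refinement is what will feed the $\beta$-matching clause of the implementation theorem.

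The argument is essentially bookkeeping, so there is no genuine obstacle; the one point deserving care is the handling of names. The substitution $\code\isub\var{\codetwo}$ on the machine side is a meta-level substitution on \emph{codes}, \ie on terms not quotiented by $\alpha$, whereas the root-$\beta$ rule of \refeq{cbn-strategy} uses capture-avoiding substitution on $\alpha$-classes; one has to observe that, since $\decode\cdot$ is defined only up to $\alpha$ and the codes reached from well-named compiled terms stay well-named, the two agree after decoding. Granting this, both items follow as displayed above.
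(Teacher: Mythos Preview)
Your proposal is correct and follows essentially the same approach as the paper: both route item~\ref{p:searching-am-one-step-principal} through the contextual decoding to expose the redex inside $\decode\stack$, and both treat item~\ref{p:searching-am-one-step-commutative} as an immediate unfolding of the inductive decoding clause. Your write-up is simply more explicit---spelling out the inductive bridge between the two decodings, the fact that $\decode\stack$ is an evaluation context (hence the sharper $\towh$), and the $\alpha$-equivalence caveat---where the paper leaves these implicit in a one-line computation.
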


%
%

\proof
The first point is more easily proved using the contextual definition of decoding.
\begin{enumerate}
\item $\decode \state = \decode{ \statesearch{ \la\var\code}{ \codetwo\cons\stack } } = \decodep{ \codetwo\cons\stack }{ \la\var\tm } = \decodep{ \stack }{ (\la\var\tm) \tmtwo } \tob \decodep{ \stack }{ \tm \isub\var\tmtwo } = \decode\statetwo$.
\item $\decode \statetwo = \decode{ \statesearch{ \code }{ \codetwo\cons\stack } }=  \decode{ \statesearch{ \code\codetwo }{ \stack } } = \decode{ \state }$.\qed
\end{enumerate}\medskip

Transitions decoding extends to a projection of executions to derivations (via a straightforward induction on the length of the execution), as required by the implementation theorem. For the \emph{derivations-to-executions} part of the theorem, we proceed similarly, by first proving that single weak head steps are simulated by the Searching AM and then extending the simulation to derivations via an easy induction. There is a subtlety, however, because, if done naively, one-step simulations do not compose.

Let us explain the point. Given a step $\tm \towh \tmtwo$ there exists a state $\state$ such that $\compil\tm \tomachappl^* \tomachrbeta \state$ and $\decode\state = \tmtwo$, as expected. This property, however, cannot be iterated to build a many-steps simulation, because $\decode\state = \tmtwo$ does not imply $\state = \compil\tmtwo$, \ie $\state$  in general is not the compilation of $\tmtwo$. To make things work, the simulation of $\tm \towh \tmtwo$ should not start from $\compil\tm$ but from a state $\statetwo$ such that $\decode\statetwo = \tm$. Now, the proof of the step simulation lemma we just described relies on the following three properties:

\begin{lemma}[Bricks for Step Simulation]\hfill
\label{l:searching-am-bricks} 
\begin{enumerate}
	\item \label{p:searching-am-bricks-termination}
	\emph{Vanishing Transitions Terminate}:  $\tomachappl$ terminates;

	\item \label{p:searching-am-bricks-determinism}
	\emph{Determinism}: the Searching AM is deterministic;

	\item \label{p:searching-am-bricks-progress}
	\emph{Progress}: final Searching AM states decode to $\towh$-normal terms.
\end{enumerate} 
\end{lemma}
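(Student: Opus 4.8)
The three properties are all structurally simple and can be handled by direct inspection of the two transitions in Figure~\ref{fig:searching-am}. The plan is to treat them one at a time, using size/structure measures on states.

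\emph{Vanishing Transitions Terminate.} The transition $\tomachappl$ takes a state $\statesearch{\code\codetwo}{\stack}$ to $\statesearch{\code}{\codetwo\cons\stack}$, strictly decreasing the size of the code in evaluation position while leaving the rest of the state to grow only by pushing the argument $\codetwo$. I would take as termination measure simply $\size{\code}$, the size of the code component of the state: each $\tomachappl$ step strictly decreases it (it removes one application node), and this measure is a natural number, so no infinite sequence of $\tomachappl$ steps is possible. (One could equivalently count the number of application nodes on the left spine of $\code$.)

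\emph{Determinism.} I would argue by a case analysis on the shape of the code $\code$ in a reachable state $\statesearch{\code}{\stack}$. If $\code$ is an application $\code_1\code_2$, only $\tomachappl$ applies, and it is deterministic since the decomposition of an application into function and argument is unique. If $\code$ is an abstraction $\la\var\code'$, then $\tomachrbeta$ applies exactly when the stack is non-empty, i.e.\ of the form $\codetwo\cons\stack$, and again the result is uniquely determined; if the stack is empty, no transition applies. If $\code$ is a variable, no transition applies regardless of the stack. In every case at most one transition is enabled and its outcome is uniquely determined, so $\tomach$ is a partial function, i.e.\ the machine is deterministic.

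\emph{Progress.} Here I would use the case analysis of the previous point to characterize final states: a reachable state $\statesearch{\code}{\stack}$ is final exactly when no transition applies, which by the above happens precisely when either $\code$ is a variable, or $\code$ is an abstraction and $\stack = \stempty$. In the first case, writing $\stack = \codetwo_1\cons\cdots\cons\codetwo_k\cons\stempty$, the decoding is $\decode\state = \var\,\tmtwo_1\cdots\tmtwo_k$ (using the contextual decoding $\decode\state = \decodep\stack\code$ with $\decode\stack$ the context built from the stack), which is $\towh$-normal since no head $\beta$-redex is ever exposed by plugging into a context of the form $\ctxhole\,\tmtwo_1\cdots\tmtwo_k$ when the head is a variable. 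In the second case $\decode\state = \la\var\tm$ is an abstraction, hence trivially $\towh$-normal since $\towh$ never fires at the root of an abstraction and never goes under $\la$. Thus every final state decodes to a $\towh$-normal term.

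\emph{Main obstacle.} None of the three points is technically hard; the only delicate part is making sure the Progress argument is stated for \emph{reachable} states and that the contextual decoding is used, so that the shape of the decoded term is transparent. In particular one must be careful that ``$\code$ is a variable'' in a reachable state really does correspond to a $\towh$-normal decoding for \emph{every} stack — this is immediate from the definition of $\towh$ via evaluation contexts $\evctx \defeq \ctxhole \mid \evctx\tmthree$, since such a context can only expose a redex whose head is itself an application with an abstraction on its left spine, never a bare variable.
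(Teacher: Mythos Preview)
Your proposal is correct and follows essentially the same approach as the paper: termination via the size of the code component, determinism by case analysis on the shape of the code showing the two transitions do not overlap, and progress by identifying final states as $\statesearch{\var}{\stack}$ or $\statesearch{\la\var\code}{\stempty}$ and checking both decode to $\towh$-normal forms. The paper's proof is a three-line sketch of exactly these ideas; your version simply spells out the details (and your caution about reachability is already absorbed by the paper's definition of \emph{final} state).
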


\begin{proof}
	\emph{Termination}: $\tomachappl$-sequences are bound by the size of the code. \emph{Determinism}: $\tomachrbeta$ and $\tomachappl$ clearly do not overlap and can be applied in a unique way. \emph{Progress}: final states have the form $\statesearch{ \la\var\code } \stempty$ and $\statesearch \var \stack$, that both decode to $\towh$-normal forms.
\end{proof}

\begin{lemma}[One-Step Simulation]
\label{l:searching-am-one-step-reverse}
Let $\state$ be a Searching AM state. If $\decode\state \towh \tmtwo$ then there exists a state $\statetwo$ such that $\state \tomachappl^* \tomachrbeta \statetwo$ and $\decode\statetwo = \tmtwo$.
\end{lemma}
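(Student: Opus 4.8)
The plan is to proceed by induction on the size of the code of $\state$, reducing the statement to a case analysis on the shape of that code. Write $\state = \statesearch\code\stack$ and recall that $\decode\state = \decodep\stack\tm$, where $\tm$ is the $\lambda$-term read off $\code$ and $\decode\stack$ is the applicative (hence weak head) evaluation context $\ctxhole\,\tmtwo_1\cdots\tmtwo_k$ associated to $\stack$. The three constructors of $\code$ give three cases.

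If $\code = \code_1\code_2$, the machine fires the overhead transition $\state \tomachappl \statesearch{\code_1}{\code_2\cons\stack} =: \state'$. By the overhead part of Lemma~\ref{l:searching-am-one-step} we have $\decode{\state'} = \decode\state \towh \tmtwo$, and since the code $\code_1$ is strictly smaller than $\code_1\code_2$, the \ih yields a state $\statetwo$ with $\state' \tomachappl^* \tomachrbeta \statetwo$ and $\decode\statetwo = \tmtwo$; prepending the fired step gives $\state \tomachappl^* \tomachrbeta \statetwo$, as required.

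If $\code$ is a variable, or $\code$ is an abstraction and $\stack = \stempty$, then $\decode\state$ is a term of the shape $\var\,\tmtwo_1\cdots\tmtwo_k$ or an abstraction, both $\towh$-normal (this is exactly the observation behind the progress point of Lemma~\ref{l:searching-am-bricks}); this contradicts the hypothesis $\decode\state \towh \tmtwo$, so these sub-cases cannot occur. The remaining case is $\code = \la\var\code'$ with $\stack = \codetwo\cons\stack'$: here the machine fires the $\beta$-transition $\state \tomachrbeta \statesearch{\code'\isub\var\codetwo}{\stack'} =: \statetwo$, and the $\beta$-transition part of Lemma~\ref{l:searching-am-one-step} together with the fact that $\decode\stack'$ is an evaluation context gives $\decode\state \towh \decode\statetwo$. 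Since the weak head strategy is deterministic and $\decode\state \towh \tmtwo$ by hypothesis, we conclude $\decode\statetwo = \tmtwo$, and $\statetwo$ (reached by $\tomachappl^*\tomachrbeta$ with zero overhead steps) is the desired state.

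The one genuinely delicate point is this last step: Lemma~\ref{l:searching-am-one-step} only tells us that the fired machine transitions project onto \emph{some} $\beta$-rewriting, so to identify its target with $\tmtwo$ I must know that the projected step is precisely the weak head step out of $\decode\state$. This is guaranteed by two facts used above, namely that $\decode\stack$ is always a weak head evaluation context (so the projected step is a genuine $\towh$-step, not merely an unrestricted $\tob$-step) and that $\towh$ is deterministic; it is also the reason why the search transitions $\tomachappl$ must be exhausted before the $\beta$-transition, since they reposition the code along the left spine exactly so that the $\beta$-transition falls on the head redex. Everything else is the routine structural bookkeeping of the case analysis; alternatively, one can package the application case into a single preliminary step by invoking termination of $\tomachappl$ (Lemma~\ref{l:searching-am-bricks}) to reach a state whose code is not an application, and then run the remaining two cases.
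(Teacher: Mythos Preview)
Your proof is correct. The route you take differs from the paper's only in packaging: the paper argues abstractly by first passing to the $\tomachappl$-normal form of $\state$ (existence and uniqueness from \reflemmap{searching-am-bricks}{termination} and \reflemmap{searching-am-bricks}{determinism}), then invoking progress (\reflemmap{searching-am-bricks}{progress}) to rule out that this normal form is final, so that a $\tomachrbeta$ step must apply, and finally using determinism of $\towh$ exactly as you do. Your explicit induction on the size of the code is just an unfolding of the termination argument for $\tomachappl$, and your case analysis on the shape of the $\tomachappl$-normal code is an unfolding of the progress argument; you even sketch the paper's packaging yourself in your closing sentence. The paper's version is more modular (it reuses the bricks lemma wholesale and would transfer verbatim to other machines once the analogous bricks are in place), while yours is more self-contained and, as you note, slightly more explicit about why the projected $\beta$-step is a $\towh$-step rather than an arbitrary $\tob$-step.
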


\begin{proof}
Let $\nfpar\state\applsym$ be the normal form of $\state$ with respect to $\tomachappl$, that exists and is unique by termination of $\tomachappl$ (\reflemmap{searching-am-bricks}{termination}) and determinism of the machine (\reflemmap{searching-am-bricks}{determinism}). Since $\tomachappl$ is mapped on identities (\reflemmap{searching-am-one-step}{commutative}) one has $\decode{ \nfpar\state\applsym } = \decode\state$. By hypothesis $\decode\state$ $\towh$-reduces, so that by progress (\reflemmap{searching-am-bricks}{progress}) $\nfpar\state\applsym$ cannot be final. Then $\nfpar\state\applsym \tomachrbeta \statetwo$, and $\decode{ \nfpar\state\applsym } = \decode\state \towh \decode\statetwo$ by the one-step simulation lemma (\reflemmap{searching-am-one-step}{principal}). By determinism of $\towh$, one obtains $\decode\statetwo = \tmtwo$.
\end{proof}

Finally, we obtain the implementation theorem.

\begin{theorem}
The Searching AM implements the weak head strategy.
\end{theorem}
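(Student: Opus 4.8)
The plan is to assemble the three requirements of \refdef{implem} from the lemmas already established, proceeding by induction on the length of executions and derivations respectively.

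For the \emph{executions-to-derivations} direction, I would argue by induction on the length of the execution $\exec: \compil\tm \tomachine^* \state$. The empty execution is handled by the fact that $\decode\cdot$ inverts compilation on initial states, giving the empty derivation $\tm \towh^* \tm$. For the inductive step, split the execution as $\compil\tm \tomachine^* \statetwo \tomachine \state$; the \ih gives a derivation $\tm \towh^* \decode\statetwo$, and \reflemma{searching-am-one-step} tells us that the last transition either decodes to a single $\tob$-step (if it is $\tomachrbeta$) or to the equality $\decode\statetwo = \decode\state$ (if it is $\tomachappl$). Appending accordingly yields $\tm \towh^* \decode\state$. Crucially, along this induction the number of $\tomachrbeta$-transitions matched is exactly the number of $\tob$-steps produced, which gives the $\beta$-matching for this direction.

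For the \emph{derivations-to-executions} direction, I would induct on the length of $\deriv: \tm \towh^* \tmtwo$. The empty derivation is again immediate via compilation. For the inductive step, write $\tm \towh^* \tmthree \towh \tmtwo$; by the \ih there is an execution $\exec': \compil\tm \tomachine^* \statetwo$ with $\decode\statetwo = \tmthree$ and $\sizepr{\exec'}$ equal to the length of the prefix derivation. Since $\decode\statetwo = \tmthree \towh \tmtwo$, the One-Step Simulation lemma (\reflemma{searching-am-one-step-reverse}) supplies a state $\state$ with $\statetwo \tomachappl^* \tomachrbeta \state$ and $\decode\state = \tmtwo$. Concatenating gives the desired execution $\exec: \compil\tm \tomachine^* \state$, and since the extending segment contains exactly one $\tomachrbeta$-transition, $\sizepr\exec = \sizepr{\exec'} + 1 = \size\deriv$, which is the $\beta$-matching condition.

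The only real subtlety, already flagged in the text, is that naive one-step simulation does not compose because $\decode\state = \tmtwo$ does not force $\state = \compil\tmtwo$; this is precisely why \reflemma{searching-am-one-step-reverse} is phrased as starting from an arbitrary state $\state$ with $\decode\state \towh \tmtwo$ rather than from $\compil{\decode\state}$, so the induction above goes through without needing to re-normalize or re-compile between steps. Beyond that, the argument is purely a bookkeeping assembly of \reflemma{searching-am-one-step} and \reflemma{searching-am-one-step-reverse}, so I do not expect a genuine obstacle; the main care point is keeping the $\beta$-count exact in both inductions so that the third clause of \refdef{implem} is satisfied, not merely approximated.
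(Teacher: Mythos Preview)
Your proposal is correct and follows precisely the paper's own argument: induction on $\size\exec$ using \reflemma{searching-am-one-step} for the executions-to-derivations direction, and induction on $\size\deriv$ using \reflemma{searching-am-one-step-reverse} (together with $\decode{\compil\tm} = \tm$) for the derivations-to-executions direction. Your additional remarks on the $\beta$-matching bookkeeping and on why \reflemma{searching-am-one-step-reverse} must start from an arbitrary decoding-matching state merely spell out what the paper leaves implicit.
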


\begin{proof}
\emph{Executions to Derivations}: by induction on the length $\size\exec$ of $\exec$ using \reflemma{searching-am-one-step}. \emph{Derivations to Executions}: by induction on the length $\size\deriv$ of $\deriv$ using \reflemma{searching-am-one-step-reverse} and noting that $\decode{\compil\tm} = \tm$.
\end{proof}

\begin{figure}
   \begin{center}
   \begin{tabular}{c}
   \begin{tabular}{c|c}
   $\begin{array}{rrclc}
     \mbox{Environments} & \genv & \defeq & \stempty \mid \esub\var\code \cons \genv\\
     \mbox{Compilation} & \compil\tm & \defeq & \statesearch \code \stempty\\
         \end{array}$
   &
   $\begin{array}{rrclc}
     \mbox{Decoding} & \decode{ \statesearch \code \stempty } & \defeq & \tm\\
      & \decode{ \statesearch \code {\esub\var\codetwo \cons \genv} } & \defeq & \decode{ \statesearch {\code \isub\var\codetwo} \genv }
         \end{array}$
         
         \end{tabular}
\\\\
   $
   \begin{array}{|c|c||c||c|c|}
   \hline
		Code & Env & Trans & Code & Env
\\
		\hhline{=|=||=||=|=}								
		(\la{\var}\code)\codetwo \codethree_1 \ldots \codethree_k & \genv
	  	&\tomachhole{d\beta} &
		\code \codethree_1 \ldots \codethree_k & \esub{\var}{\codetwo}\cons\genv
		\\
		\hhline{-|-||-||-|-}
		\var \codethree_1 \ldots \codethree_k & \genv\cons \esub{\var}{\code} \cons\genvtwo
	  	&\tomachhole{var}&
	  	{\rename{\code}} \codethree_1 \ldots \codethree_k  & \genv\cons \esub{\var}{\code} \cons\genvtwo\\
\hline
  	\end{array}
	$
   \end{tabular}   
  	
	where ${\rename{\code}}$ denotes $\code$ where bound names have been freshly renamed.

  \end{center}

  \caption{Micro-Substituting Abstract Machine (Micro AM).}
     \label{fig:micro-am}
\end{figure}

\section{The Micro-Substituting Abstract Machine}

\paragraph{Decomposing Meta-Level Substitution.} The second task of abstract machines is to replace meta-level substitution $\code\isub\var\codetwo$ with \emph{micro-step substitution on demand}, \ie a parsimonious approximation of meta-level substitution based on:
\begin{enumerate}
\item \emph{Sharing}: when a $\beta$-redex $(\la\var\code)\codetwo$ is in evaluation position it is fired but the meta-level substitution $\code\isub\var\codetwo$ is delayed, by introducing an annotation $\esub\var\codetwo$ in a data-structure for delayed substitutions called \emph{environment};
\item \emph{Micro-Step Substitution}: variable occurrences are replaced one at a time;
\item \emph{Substitution on Demand}: replacement of a variable occurrence happens only when it ends up in evaluation position---variable occurrences that do not end in evaluation position are never substituted.
\end{enumerate}
The purpose of this section is to illustrate this process in isolation via the study of a toy machine, the \emph{Micro-Substituting Abstract Machine} (Micro AM) in \reffig{micro-am}, forgetting about the search for redexes.

\paragraph{Environments.} We are going to treat environments in an unusual way: the literature mostly deals with \emph{local} environments, to be discussed in \refsect{names}, while here we prefer to first address the simpler notion of \emph{global} environment, but to ease the terminology we will simply call them \emph{environments}. So, an \emph{environment} $\genv$ is a list of entries of the form $\esub\var\codetwo$. Each entry denotes the \emph{delayed} substitution of $\codetwo$ for $\var$. In a state $\statemicro\code {\genvtwo\cons\esub\var\codetwo\cons\genvthree}$ the scope of $\var$ is given by $\code$ and $\genvtwo$, as it is stated by forthcoming \reflemma{micro-am-name-invariant}. The (global) environment models a store. As it is standard in the literature, it is a \emph{list}, but the list structure is only used to obtain a simple decoding and a handy delimitation of the scope of its entries. These properties are useful to develop the meta-theory of abstract machines, but keep in mind that (global) environments are not meant to be implemented as lists.

 \paragraph{Code.} The code under evaluation is now a $\l$-term $h \codethree_1 \ldots \codethree_k$ expressed as a head $h$ (that is either a $\beta$-redex $(\la{\var}\code)\codetwo$ or a variable $\var$) applied to $k$ arguments---it is a by-product of the fact that the Micro AM does not address \emph{search}. 
 
 \paragraph{Transitions.} There are two transitions:
 \begin{itemize}
 	\item \emph{Delaying $\beta$}: transition $\tomachhole{d\beta}$ removes the $\beta$-redex $(\la{\var}\code)\codetwo$ but does not execute the expected substitution $\isub\var\codetwo$, it rather delays it, adding $\esub\var\codetwo$ to the environment. It is the $\beta$-transition of the Micro AM.
	\item \emph{Micro-Substitution On Demand}: if the head of the code is a variable $\var$ and there is an entry $\esub\var\code$ in the environment then transition $\tomachvar$ replaces that occurrence of $\var$---and only that occurrence---with a copy of $\code$. It is necessary to rename the new copy of $\code$ (into a well-named term) to avoid name clashes. It is the overhead transition of the Micro AM.
 \end{itemize}
 
 \paragraph{Implementation.} Compilation sends a (well-named) term $\tm$ to the initial state $\statemicro\code\stempty$, as for the Searching AM (but now the empty data-structure is the environment). The decoding simply applies the delayed substitutions in the environment to the term, considering them as meta-level substitutions.
 
 The implementation of weak head reduction $\towh$ by the Micro AM can be shown using the recipe given for the Searching AM, and it is therefore omitted. The only difference is in the proof that the overhead transition $\tomachvar$ terminates, that is based on a different argument. We spell it out because it will be useful also later on for complexity analyses. It requires the following invariant of machine executions:
 
 \begin{lemma}[Name Invariant]
 \label{l:micro-am-name-invariant} 
  Let $\state = \statemicro \code \genv$ be a Micro AM reachable state. 
  \begin{enumerate}
    \item \label{p:micro-am-name-invariant-abs} \emph{Abstractions}: if $\la\var\codetwo$ is a subterm of $\code$ or of any code in $\genv$ then $\var$ may occur only in $\codetwo$;
    \item \label{p:micro-am-name-invariant-env} \emph{Environment}: if $\genv = \genvtwo \cons \esub\var\codetwo \cons \genvthree$ then $\var$ is fresh with respect to $\codetwo$ and $\genvthree$.    
  \end{enumerate}   
 \end{lemma}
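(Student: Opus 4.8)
The plan is a routine induction on the length of the execution reaching $\state$, the only genuinely delicate point being the overhead transition $\tomachvar$. In the base case $\state = \statemicro{\code}{\stempty}$ is an initial state obtained by compiling a well-named term, so Point~1 is exactly the well-namedness of $\code$ (the environment being empty, there is nothing else to inspect) and Point~2 holds vacuously. For the inductive step I would take $\state$ obtained from a reachable $\state'$ --- which satisfies the invariant by the \ih{} --- through a single transition, and split on which transition is fired.

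For $\statemicro{(\la\var\code)\codetwo\codethree_1 \ldots \codethree_k}{\genv} \tomachhole{d\beta} \statemicro{\code\codethree_1 \ldots \codethree_k}{\esub\var\codetwo \cons \genv}$ the key observation is that no name is created or copied: the fragments $\code$, $\codetwo$, $\codethree_1, \ldots, \codethree_k$ are merely rearranged, so the non-binding occurrences of every name are exactly those of $\state'$, and Point~1 is inherited from the \ih{} --- the outermost abstraction $\la\var\code$ of the fired redex having simply disappeared from the code. For Point~2, the entries already in $\genv$ keep the same codes to their right and are fine by the \ih; for the new head entry $\esub\var\codetwo$ the required freshness of $\var$ with respect to $\codetwo$ and $\genv$ is precisely Point~1 of the \ih{} applied to the subterm $\la\var\code$ of the code of $\state'$, which says that $\var$ may occur only inside $\code$, hence neither in the sibling argument $\codetwo$ nor anywhere in the environment $\genv$.

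For $\statemicro{\var\codethree_1 \ldots \codethree_k}{\genv \cons \esub\var\code \cons \genvtwo} \tomachvar \statemicro{\rename\code\,\codethree_1 \ldots \codethree_k}{\genv \cons \esub\var\code \cons \genvtwo}$ the environment is untouched, so Point~2 is immediate from the \ih. For Point~1 the only new piece of code is the copy $\rename\code$, whose binders carry fresh names and therefore occur only in their own bodies (indeed nowhere else in $\state$): this settles the abstractions internal to $\rename\code$. For any other abstraction $\la{\var'}\codetwo'$ --- a subterm of some $\codethree_i$ or of some code in the environment --- I would argue that $\var'$ does not occur in $\rename\code$ at all. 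Indeed an occurrence of $\var'$ there, the binders of $\code$ being all renamed, would be a free occurrence of $\var'$ in $\code$; then $\var'$ would occur in the environment code $\code$, which is disjoint from $\codetwo'$ (were $\la{\var'}\codetwo'$ a subterm of $\code$, the name $\var'$ would be bound, not free, in $\code$), so $\var'$ would occur outside $\codetwo'$ in $\state'$, against the confinement of all $\var'$-occurrences of $\state'$ to $\codetwo'$ granted by the \ih. Hence the occurrences of $\var'$ in $\state$ are among those of $\state'$ and all lie in $\codetwo'$, which is Point~1 for $\state$.

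The only real obstacle is this last case: one must check that splicing the freshly renamed code $\rename\code$ into evaluation position cannot create an out-of-scope occurrence of a bound name of the state. The argument above isolates why: the bound names of $\code$ are renamed away by $\rename{\cdot}$, and any free name of $\code$ that could clash with a binder elsewhere is ruled out by the global character of Point~1 of the \ih{} --- every occurrence of such a name already sits inside the body of its binder, which cannot be the top-level code $\code$. Everything else is bookkeeping, and the statement for single transitions lifts to full executions by a trivial induction on their length.
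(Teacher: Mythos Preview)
Your proposal is correct and follows exactly the same approach as the paper: induction on the length of the execution, with the base case given by well-namedness of the initial code and the inductive step by checking that each transition preserves the invariant. The paper dispatches the inductive step in one line (``an immediate inspection shows''), whereas you spell out the case analysis---in particular the delicate point that in $\tomachvar$ the fresh renaming of binders in $\rename\code$ together with Point~1 of the \ih{} prevents any name clash---which is a welcome expansion but not a different argument.
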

 
 \begin{proof}
   By induction on the length of the execution $\exec$ leading to $\state$. If $\exec$ is empty then $\state$ is initial and the statement holds because $\code$ is well-named by hypothesis. If $\exec$ is non-empty then it follows from the \ih and the fact that transitions preserve the invariant, as an immediate inspection shows.
 \end{proof}

\begin{lemma}[Micro-Substitution Terminates]
\label{l:exp-trans-terminate}
  $\tomachvar$ terminates in at most $\size\genv$ steps (on reachable states). 
\end{lemma}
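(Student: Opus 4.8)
The plan is to follow, along a $\tomachvar$-sequence, the \emph{position} in the environment of the entry that triggers each step, and to show that this position strictly increases. Since $\tomachvar$ never touches the environment, these positions all live in one fixed list of entries of length $\size\genv$, so the sequence must stop after at most $\size\genv$ steps.

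Concretely, I would fix a reachable state $\state = \statemicro{\code}{\genv}$ and set $n \defeq \size\genv$. By \reflemmap{micro-am-name-invariant}{env} the $n$ entries of $\genv$ carry pairwise distinct names, so one may speak of \emph{the} entry binding a given variable and of its index among $1,\ldots,n$ from the left. Along any $\tomachvar$-sequence issued from $\state$ the environment component stays equal to $\genv$, since $\tomachvar$ leaves it unchanged (only $\tomachhole{d\beta}$ modifies it, and it does not occur in a pure $\tomachvar$-sequence).

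The heart of the argument is the claim: if a $\tomachvar$-step fires on the entry $\esub{\var}{\codetwo}$ of index $p$ --- that is, $\genv = \genv_L \cons \esub{\var}{\codetwo} \cons \genv_R$ with $\size{\genv_L} = p-1$ --- then the next $\tomachvar$-step, if any, fires on an entry of index $p' > p$. Indeed, after the step the code has the shape $\rename{\codetwo}\,\codethree_1 \ldots \codethree_k$, whose head coincides with the head of $\rename{\codetwo}$ unless $\rename{\codetwo}$ is an abstraction (in which case the head is a redex or the code is itself an abstraction, and $\tomachvar$ does not apply). Hence, if $\tomachvar$ applies again, its head is a variable $\vartwo$ occurring free in $\rename{\codetwo}$, thus free in $\codetwo$, since renaming bound names preserves free occurrences. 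Now \reflemmap{micro-am-name-invariant}{env} applied to the entry $\esub{\var}{\codetwo}$ of the reachable state $\state$ gives that $\var$ is fresh for $\codetwo$, whence $\vartwo \neq \var$; and if the entry binding $\vartwo$ were located inside $\genv_L$, then \reflemmap{micro-am-name-invariant}{env} applied to \emph{that} entry would make $\vartwo$ fresh for everything to its right, in particular for $\codetwo$, contradicting that $\vartwo$ occurs free in $\codetwo$. Therefore the entry binding $\vartwo$ lies in $\genv_R$, i.e.\ has index $p' > p$, which proves the claim.

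Assembling the pieces, the indices $p_1 < p_2 < \cdots$ of the entries fired in succession along a $\tomachvar$-sequence form a strictly increasing sequence inside $\{1,\ldots,n\}$, so the sequence has length at most $n = \size\genv$. I expect the only delicate point to be the shape analysis showing that, whenever $\tomachvar$ fires again, the new head is genuinely a variable occurring free in the just-substituted code $\codetwo$; granted that, the rest is bookkeeping with \reflemma{micro-am-name-invariant}.
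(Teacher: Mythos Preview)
Your proposal is correct and follows essentially the same route as the paper: both argue that after a $\tomachvar$-step firing on $\esub{\var}{\codetwo}$ the next $\tomachvar$-step (if any) must fire on an entry strictly to the right, using the name invariant to rule out entries to the left, and then conclude by the fixed length of $\genv$. Your version is simply more explicit, in particular in handling the case $\vartwo = \var$ and the shape analysis of the new head, which the paper leaves implicit.
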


\begin{proof}
  Consider a $\tomachvar$ transition copying $\codetwo$ from the environment $\genvtwo \cons \esub\var\codetwo \cons \genvthree$. If the next transition is again $\tomachvar$, then the head of $\codetwo$ is a variable $\vartwo$ and the transition copies from an entry in $\genvthree$ because by \reflemma{micro-am-name-invariant} $\vartwo$ cannot be bound by the entries in $\genvtwo$. Then the number of consecutive $\tomachvar$ transitions is bound by $\genv$ (that is not extended by $\tomachvar$).
\end{proof}

\begin{theorem}
The Micro AM implements the weak head strategy.
\end{theorem}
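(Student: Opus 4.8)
The plan is to replay, \emph{mutatis mutandis}, the four-step recipe developed for the Searching AM: a transitions decoding lemma, the three ``bricks'' (termination of the overhead transition, determinism, progress), a one-step simulation lemma, and then the theorem by two inductions. The only step whose proof departs from a verbatim transcription of the Searching AM case is the termination of the overhead transition $\tomachvar$, which however is already established by \reflemma{exp-trans-terminate} (via the Name Invariant, \reflemma{micro-am-name-invariant}).

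First I would prove the transitions decoding lemma: if $\state \tomachhole{d\beta} \statetwo$ then $\decode\state \tob \decode\statetwo$, and if $\state \tomachvar \statetwo$ then $\decode\state = \decode\statetwo$. The overhead case is immediate: $\tomachvar$ replaces one occurrence of a variable $\var$ by a freshly renamed copy of the code $\code$ bound to $\var$ in the environment, while the decoding applies $\isub\var\code$ to the whole term --- so it substitutes \emph{all} occurrences of $\var$ anyway, and removing or not removing one occurrence beforehand gives the same result; the renaming is invisible since the decoding lands in terms taken up to $\alpha$. For the $\beta$-transition, one unfolds the decodings of $\statemicro{(\la\var\code)\codetwo\codethree_1\ldots\codethree_k}\genv$ and of $\statemicro{\code\codethree_1\ldots\codethree_k}{\esub\var\codetwo\cons\genv}$: by freshness of $\var$ with respect to $\genv$ (\reflemma{micro-am-name-invariant}) the pending substitution $\isub\var\codetwo$ commutes with the substitutions of $\genv$, so the first state decodes to a term of the shape $((\la\var\code')\codetwo')\codethree_1'\ldots\codethree_k'$, the primes denoting the application of $\genv$, whose head redex fires exactly into the decoding of the second state. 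This step is moreover a $\towh$-step: the redex sits at the head of an applicative spine, a shape that substitution for free variables cannot destroy, so rule $(@l)$ iterated $k$ times followed by root $\beta$ derives it. As for the Searching AM, this lemma lifts by induction on $\size\exec$ to the \emph{executions-to-derivations} half of the implementation theorem, and $\beta$-matching holds on the nose, since only $\tomachhole{d\beta}$ transitions decode to a step and each decodes to exactly one.

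For the \emph{derivations-to-executions} half I would collect the bricks: termination of $\tomachvar$ is \reflemma{exp-trans-terminate}; determinism is clear since the two transitions do not overlap (the head of the code is a $\beta$-redex or a variable) and each applies uniquely; and progress holds because a final state has code either $\la\var\code$, which decodes to an abstraction, or $\var\codethree_1\ldots\codethree_k$ with no entry for $\var$ in the environment, which decodes to a term with a free head variable --- both $\towh$-normal. The one-step simulation lemma (if $\decode\state \towh \tmtwo$ then $\state \tomachvar^* \tomachhole{d\beta} \statetwo$ with $\decode\statetwo = \tmtwo$) then follows exactly as \reflemma{searching-am-one-step-reverse}: the $\tomachvar$-normal form of $\state$ exists and is unique by termination and determinism and decodes to $\decode\state$ by transitions decoding; since $\decode\state$ $\towh$-reduces, progress forbids it to be final, so a $\tomachhole{d\beta}$ transition fires and decodes to a $\towh$-step out of $\decode\state$, which determinism of $\towh$ forces onto $\tmtwo$. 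The theorem is then obtained as for the Searching AM: \emph{executions to derivations} by induction on $\size\exec$, \emph{derivations to executions} by induction on $\size\deriv$ using one-step simulation together with $\decode{\compil\tm} = \tm$, and $\beta$-matching as observed.

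The only point requiring genuine care --- the place where the Micro AM forces more than mechanical bookkeeping --- is the $\beta$-transition case of transitions decoding: one must check that appending $\esub\var\codetwo$ to the environment corresponds to a \emph{single} $\towh$-step on the decoded term, which hinges both on the Name Invariant, to commute the delayed and pending substitutions, and on the stability of the head-spine shape of the code under the environment's substitutions, so that the exposed redex is genuinely a weak head redex. Everything else is a transcription of the Searching AM argument.
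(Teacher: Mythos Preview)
Your proposal is correct and follows exactly the approach the paper takes: the paper in fact omits the proof, stating only that it ``can be shown using the recipe given for the Searching AM'' with the sole difference being the termination of $\tomachvar$, which is supplied by \reflemma{exp-trans-terminate}. You have faithfully reconstructed that recipe, and your identification of the one genuinely delicate point---the $\beta$-transition case of transitions decoding, where the Name Invariant is needed to commute the new substitution past the environment and to ensure the decoded redex is at the head---is apt and goes slightly beyond what the paper spells out.
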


\section{Search + Micro-Substitution = Milner Abstract Machine}
The Searching AM and the Micro AM can be merged together into the Milner Abstract Machine (MAM), defined in \reffig{mam}. The MAM has both an argument stack and an environment. The machine has one $\beta$-transition $\tomachrbeta$ inherited from the Searching AM, and two overhead transitions, $\tomachappl$ inherited from the the Searching AM and $\tomachvar$ inherited from the Micro AM. Note that in $\tomachvar$ the code now is simply a variable, because the arguments are supposed to be stored in the argument stack.

For the implementation theorem once again the only delicate point is to prove that the overhead transitions terminate. As for the Micro AM one needs a name invariant. A termination measure can then be defined easily by mixing the size of the codes (needed for $\tomachappl$) and the size of the environment (needed for $\tomachvar$), and it is omitted here, because it will be exhaustively studied for the complexity analysis of the MAM. Therefore, we obtain that:

\begin{theorem}
The MAM implements the weak head strategy.
\end{theorem}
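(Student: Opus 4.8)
The plan is to replay, \emph{mutatis mutandis}, the modular recipe developed for the Searching AM, borrowing from the Micro AM the one ingredient that genuinely changes, namely the termination argument for the overhead transitions. A MAM state is a triple $(\code,\stack,\genv)$ made of a code, an argument stack, and a global environment; compilation sends a term $\tm$ to the initial state $(\code,\stempty,\stempty)$ with $\code$ a well-named variant of $\tm$; and the decoding is the evident combination of the two decodings already introduced: reading each environment entry $\esub\var\codetwo$ as a meta-level substitution $\isub\var\codetwo$ and applying it, entry by entry, to the code and to the codes stacked in $\stack$, one reaches at the empty environment a state of Searching-AM shape, decoded by plugging the code into the evaluation context read off the stack, so that $\decode{(\code,\stempty,\stempty)}$ is just $\code$ (read as a $\l$-term up to $\alpha$) and $\decode{(\code,\codetwo\cons\stack,\stempty)} = \decode{(\code\codetwo,\stack,\stempty)}$. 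With these data in place the proof splits into the \emph{executions-to-derivations} and \emph{derivations-to-executions} halves of the implementation theorem, and the $\beta$-matching will be read off the shape of the two inductions.

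For \emph{executions to derivations} I would first prove the Transitions Decoding lemma, the analogue of \reflemma{searching-am-one-step}, for the three transitions: the $\beta$-transition $\tomachrbeta$ satisfies $\decode\state \towh \decode\statetwo$, while the two overhead transitions $\tomachappl$ and $\tomachvar$ satisfy $\decode\state = \decode\statetwo$. The $\tomachrbeta$- and $\tomachappl$-cases are the Searching-AM computations carried out inside the substitution context induced by the environment; the $\tomachvar$-case holds because, once the entry $\esub\var\code$ has been turned into a meta-level substitution, replacing one occurrence of $\var$ by the renamed copy $\rename\code$ yields the same term up to $\alpha$. A straightforward induction on $\size\exec$ then turns any execution $\compil\tm \tomachine^* \state$ into a derivation $\tm \towh^* \decode\state$ with exactly one $\towh$-step per $\beta$-transition, which gives the first requirement together with its $\beta$-matching.

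For \emph{derivations to executions} I would prove the one-step simulation lemma, the analogue of \reflemma{searching-am-one-step-reverse}: if $\decode\state \towh \tmtwo$ then some $\statetwo$ is reached from $\state$ by a (possibly empty) run of overhead transitions followed by one $\tomachrbeta$, with $\decode\statetwo = \tmtwo$. As for the Searching AM this rests on three bricks (the analogues of \reflemma{searching-am-bricks}): (i) the overhead transitions terminate, (ii) the MAM is deterministic, (iii) final states decode to $\towh$-normal terms. Determinism is immediate, since the triggers of $\tomachrbeta$, $\tomachappl$ and $\tomachvar$ are pairwise disjoint and each fires in a unique way; progress follows by inspecting the states on which no transition applies---a code that is an abstraction with empty stack, or a code that is a variable with no matching environment entry---which all decode to weak head normal forms. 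Given the bricks the lemma follows verbatim from the Searching-AM proof: the overhead-normal form $\state'$ of $\state$ exists by (i) and (ii) and satisfies $\decode{\state'} = \decode\state$ by Transitions Decoding; since $\decode\state$ is not $\towh$-normal, by (iii) $\state'$ is not final, hence $\state' \tomachrbeta \statetwo$ with $\decode{\state'}\towh\decode\statetwo$, and determinism of $\towh$ forces $\decode\statetwo = \tmtwo$. An induction on $\size\deriv$ started from $\compil\tm$---using $\decode{\compil\tm} = \tm$---then yields the second requirement and its $\beta$-matching.

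The only point requiring real work, hence the main obstacle, is brick (i): termination of the overhead transitions now interleaves $\tomachappl$, which shrinks the code by moving an argument onto the stack, with $\tomachvar$, which \emph{enlarges} the code by substituting a freshly renamed copy of an environment code. I would first lift the Name Invariant of \reflemma{micro-am-name-invariant} to the MAM---reachable states $(\code,\stack,\genv)$ have well-scoped binders in the code and in the codes of the stack, and each environment entry $\esub\var\codetwo$ is fresh for $\codetwo$ and for the entries to its right---by the usual induction on execution length, checking that each transition preserves it. Exactly as in \reflemma{exp-trans-terminate}, the invariant forces consecutive $\tomachvar$ steps to copy from strictly later environment entries, so a maximal run of $\tomachvar$ has length at most $\size\genv$; since neither overhead transition enlarges the environment and $\tomachappl$ strictly decreases the size of the code, a measure mixing the code size with the fixed environment size---equivalently, the lexicographic pair given by the code size and the residual $\tomachvar$-budget---strictly decreases along every overhead transition, which gives termination. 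Producing this measure in full detail is precisely the bookkeeping deferred to the complexity analysis of the MAM, so here it is enough to exhibit it and observe that it is well-founded.
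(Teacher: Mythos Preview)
Your proposal is correct and follows exactly the modular recipe the paper itself uses: replay the Searching-AM proof structure (Transitions Decoding, then the three bricks, then one-step simulation, then induction), importing from the Micro AM only the name invariant needed for termination of the overhead transitions. The paper's own proof is in fact terser than yours---it simply points to the recipe and singles out termination as the one delicate point, deferring the explicit measure to the complexity section---so you have filled in precisely what the paper leaves implicit.

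One small correction on that measure: the lexicographic pair must have the residual $\tomachvar$-budget as its \emph{first} component and the code size as the second, not the order you wrote. With code size first the pair fails to decrease under $\tomachvar$, since the freshly copied code may well be larger than the variable it replaces. With the components swapped, $\tomachvar$ strictly decreases the first component (by the name-invariant argument you give, which in fact also survives intervening $\tomachappl$ steps because passing to a subterm can only shrink the set of free variables and hence cannot move the next usable entry leftward), while $\tomachappl$ leaves the first component unchanged and strictly decreases the second. This is exactly the ``mixing the size of the codes and the size of the environment'' that the paper alludes to.
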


\begin{figure}
\begin{center}
  \begin{tabular}{c}
    \begin{tabular}{c|c}
   $\begin{array}{rrclc}
     \mbox{Environments} & \genv & \defeq & \stempty \mid \esub\var\code \cons \genv\\
      \mbox{Stacks} & \stack & \defeq & \stempty \mid \code \cons \stack\\
     \mbox{Compilation} & \compil\tm & \defeq & \statemam \code \stempty \stempty \\
         \end{array}$
         &
         $\begin{array}{rrclc}
     \mbox{Decoding} & \decode{ \statemam \code \stempty \stempty } & \defeq & \tm\\
      & \decode{ \statemam \code {\codetwo \cons \stack} \genv } & \defeq & \decode{ \statemam {\code \codetwo} \stack \genv}\\
      & \decode{ \statemam \code \stempty {\esub\var\codetwo \cons \genv} } & \defeq & \decode{ \statemam {\code \isub\var\codetwo} \stempty \genv }
         \end{array}$
   
   \end{tabular}
   
\\\\
  	$\begin{array}{|c|c|c||c||c|c|c|cccc}
   \hline
		Code & Stack & Env & Trans & Code & Stack & Env
\\
		\hhline{=|=|=||=||=|=|=}			
				\code\codetwo & \stack & \genv
	  	&\tomachhole{@l}&
	  	\code & \codetwo\cons\stack & \genv
		\\
		\hhline{-|-|-||-||-|-|-}
						
		\la{\var}\code & \codetwo\cons\stack & \genv
	  	&\tomachhole{r\beta} &
		\code & \stack & \esub{\var}{\codetwo}\cons\genv
		\\
		\hhline{-|-|-||-||-|-|-}
		\var & \stack & \genv\cons \esub{\var}{\code} \cons\genvtwo
	  	&\tomachvar&
	  	{\rename{\code}}  & \stack & \genv\cons \esub{\var}{\code} \cons\genvtwo\\
		\hline
  	\end{array}$
  	   \end{tabular}
  	   
  	   where ${\rename{\code}}$ denotes $\code$ where bound names have been freshly renamed.
	\end{center}
	\caption{Milner Abstract Machine (MAM).}
		\label{fig:mam}
\end{figure}

\section{Introducing Complexity Analyses}
\label{sect:intro-complexity}
The complexity analysis of abstract machines is the study of the asymptotic behavior of their overhead. 

\paragraph{Parameters for Complexity Analyses.} Let us reason abstractly, by considering a generic strategy $\tostrat$ in the $\l$-calculus and a given machine $\mach$ implementing $\tostrat$. By the \emph{derivations-to-executions} part of the implementation (\refdef{implem}), given a derivation $\deriv: \tm_0 \tostrat^n \tmtwo$ there is a shortest execution $\exec: \compil{\tm_0} \tomachine \state$ such that $\decode\state = \tmtwo$. Determining \emph{the complexity of $\mach$} amounts to bound the complexity of a concrete implementation of $\exec$, say on a RAM model, as a function of two fundamental parameters:
\begin{enumerate}
  \item \emph{Input}: the size $\size{\tm_0}$ of the initial term $\tm_0$ of the derivation $\deriv$;
  \item \emph{Strategy} the length $n = \size\deriv$ of the derivation $\deriv$, that coincides with the number $\sizebeta{\exec}$ of $\beta$-transitions in $\exec$  by the $\beta$-matching requirement for implementations.
\end{enumerate}
Note that our notion of implementation allows to forget about the strategy while studying the complexity of the machine, because the two fundamental parameters are internalized: the input is simply the initial code and the length of the strategy is simply the number of $\beta$-transitions.

\paragraph{Types of Machines.} The bound on the overhead of the machine is then used to classify it, as follows.

\begin{definition}
Let $\mach$ an abstract machine implementing a strategy $\tostrat$. Then
\begin{itemize}
  \item $\mach$ is \emph{reasonable} if the complexity of $\mach$ is polynomial in the input $\size{\tm_0}$ and the strategy $\sizebeta{\exec}$;
  \item $\mach$ is \emph{unreasonable} if it is not reasonable;
  \item $\mach$ is \emph{efficient} if it is linear in both the input and the strategy (we sometimes say that it is \emph{bilinear}).
\end{itemize}
\end{definition}

\paragraph{Recipe for Complexity Analyses.} The estimation of the complexity of a machine usually takes 3 steps:

\begin{enumerate}
  \item \emph{Number of Transitions}: bound the length of the execution $\exec$ simulating the derivation $\deriv$, usually having a bound on every kind of transition of $\mach$.
  \item \emph{Cost of Single Transitions}: bound the cost of concretely implementing a single transition of $\mach$---different kind of transitions usually have different costs. Here it is usually necessary to go beyond the abstract level, making some (high-level) assumption on how codes and data-structure are concretely represented (our case study will provide examples).
  \item \emph{Complexity of the Overhead}: obtain the total bound by composing the first two points, that is, by taking the number of each kind of transition times the cost of implementing it, and summing over all kinds of transitions.
\end{enumerate}
\section{The Complexity of the MAM}

In this section we provide the complexity analysis of the MAM, from which analyses of the Searching and Micro AM easily follow. 

\paragraph{The Crucial Subterm Invariant.} The analysis is based on the following subterm invariant. 

\begin{lemma}[Subterm Invariant]
\label{l:subterm}
Let $\exec: \compil{\tm_0} \tomachhole{MAM} \statemam \codetwo \stack \genv$ be a MAM execution. Then $\codetwo$ and any code in $\stack$ and $\genv$ are subterms of $\tm_0$.
\end{lemma}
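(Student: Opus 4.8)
The plan is a routine induction on the length of the execution $\exec$, with a three-way case analysis on the last transition in the inductive step. The one point that deserves care is the reading of the word ``subterm'': since codes are not considered up to $\alpha$-equivalence, and the transition $\tomachvar$ inserts a \emph{freshly renamed} copy $\rename\code$ of an already occurring code $\code$, the statement must be understood as ``subterm up to renaming of bound variables''. I would fix this convention at the outset. With it in place the base case is immediate: the empty execution ends in the initial state $\compil{\tm_0}$, whose code is well-named and $\alpha$-equivalent to $\tm_0$ (hence a renaming of the whole term $\tm_0$, trivially a subterm of itself) and whose stack and environment are empty, so there is nothing else to inspect.

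For the inductive step, assume the statement for an execution ending in a state $\state'$ and let $\state' \tomachhole{MAM} \state$ be the last transition. \emph{Transition $\tomachappl$}: it turns $\statemam{\code\codetwo}\stack\genv$ into $\statemam\code{\codetwo\cons\stack}\genv$; the two new codes $\code$ and $\codetwo$ are (plain) subterms of $\code\codetwo$, which is a subterm of $\tm_0$ by the \ih, and everything else is unchanged. \emph{Transition $\tomachrbeta$}: it turns $\statemam{\la\var\code}{\codetwo\cons\stack}\genv$ into $\statemam\code\stack{\esub\var\codetwo\cons\genv}$; the new code in evaluation position $\code$ is a subterm of $\la\var\code$, and the code $\codetwo$ pushed onto the environment was already on the stack, so both are subterms of $\tm_0$ by the \ih. \emph{Transition $\tomachvar$}: it turns $\statemam\var\stack{\genv\cons\esub\var\code\cons\genvtwo}$ into $\statemam{\rename\code}\stack{\genv\cons\esub\var\code\cons\genvtwo}$; the stack and the environment are untouched, and the new code $\rename\code$ is a renaming of $\code$, which is a subterm of $\tm_0$ up to renaming by the \ih --- hence so is $\rename\code$. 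This closes the induction.

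I expect no serious obstacle here: the bookkeeping is entirely mechanical once the ``up to renaming'' convention is stated, which is exactly what is needed to absorb both the compilation step and the copying transition $\tomachvar$. It is worth stressing that this lemma uses no machine invariant whatsoever --- in particular not the name invariant of \reflemma{micro-am-name-invariant} --- because it only concerns the \emph{shape} of the codes occurring in a state, never their binding discipline. This is precisely why it can serve as the ``crucial'' foundation on which the forthcoming quantitative estimates rest: bounding the size of the codes manipulated by each transition, and ultimately the whole overhead, by the size of the initial term $\tm_0$.
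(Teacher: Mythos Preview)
Your proposal is correct and follows exactly the paper's approach: induction on the length of the execution, with the inductive step checking that each of the three transitions preserves the invariant. Your care about the ``up to renaming'' reading of \emph{subterm} is well placed and matches the paper, which makes this precise just before the proof by passing to the skeleton $\tm^-$ (all variable names replaced by a fixed symbol) --- a formulation slightly more liberal than yours but equivalent for the purposes of this lemma.
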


Note that the MAM copies code only in transition $\tomachvar$, where it copies a code from the environment $\genv$. Therefore, the subterm invariant bounds the size of the subterms duplicated along the execution.

Let us be precise about \emph{subterms}: for us, $\codetwo$ is a subterm of $\tm_0$ if it does so up to variable names, both free and bound (and so the distinction between terms and codes is irrelevant). More precisely: define $\tm^-$ as $\tm$ in which all variables (including those appearing in binders) are replaced by a fixed symbol $\ast$. Then, we will consider $\tmtwo$ to be a subterm of $\tm$ whenever $\tmtwo^-$ is a subterm of $\tm^-$ in the usual sense. The key property ensured by this definition is that the size $\size\codetwo$ of $\codetwo$ is bounded by $\size\code$.

\begin{proof}
  By induction on the length of $\exec$. The base case is immediate and the inductive one follows from the \ih and the immediate fact that the transitions preserve the invariant.
\end{proof}

The subterm invariant is crucial, for two related reasons. First, it linearly relates the cost of duplications to the size of the input, enabling complexity analyses. With respect to the length of the strategy, then, micro-step operations have constant cost, as required by the recipe for micro-step operational semantics in \refsect{indeed}. Second, it implies that size-explosion has been circumvented: duplications are linear, and so the size of the state can grow at most linearly with the number of steps, \ie it cannot explode. In particular, we also obtain the compact representation of the results required by the recipe.

The relevance of the subterm invariant goes in fact well beyond abstract machines, as it is typical of most instances of  micro-step operational semantics. And for complexity analyses of the $\l$-calculus it is absolutely essential, playing a role analogous to that of the cut-elimination theorem in the study of sequent calculi or of the sub-formula property for proof search.

\paragraph{Number of Transitions.} The next lemma bounds the global number of overhead transitions. For the micro-substituting transition $\tomachvar$ it relies on an auxiliary bound of a more local form. For the searching transition $\tomachappl$ the bound relies on the subterm invariant. We denote with $\sizebeta\exec$, $\sizeappl\exec$, and $\sizevar\exec$ the number of $\tomachrbeta$, $\tomachappl$, and $\tomachvar$ transitions in $\exec$, respectively.

\begin{lemma}
\label{l:mam-trans-bound} 
  Let $\exec: \compil{\tm_0} \tomachhole{MAM} \state$ be a MAM execution. Then:
  \begin{enumerate}
    \item \label{p:mam-trans-bound-exp-local} \emph{Micro-Substitution Linear Local Bound}: if $\exectwo: \state \tomachhole{\applsym,\varsym}^* \statetwo$ then $\sizevar\exectwo \leq \size\genv = \sizebeta\exec$;
    \item \label{p:mam-trans-bound-exp-global} \emph{Micro-Substitution Quadratic Global Bound}: $\sizevar\exec \leq \sizebeta\exec^2$;
    \item \label{p:mam-trans-bound-com-local} \emph{Searching (and $\beta$) Local Bound}: if $\exectwo: \state \tomachhole{\rootbetasym,\applsym}^* \statetwo$ then $\size\exectwo \leq \size{\tm_0}$;
    \item \label{p:mam-trans-bound-com-global} \emph{Searching Global Bound}: $\sizeappl\exec \leq \size{\tm_0} \cdot (\sizevar\exec +1) \leq \size{\tm_0} \cdot (\sizebeta\exec^2 + 1)$.
  \end{enumerate}
\end{lemma}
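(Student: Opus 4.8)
The plan is to establish the four points in the order given, each relying on the ones before it; once the appropriate invariant or termination measure is identified, each point reduces to a one-line combinatorial estimate. For the first point (the micro-substitution linear local bound) I would replay the proof of \reflemma{exp-trans-terminate}, now invoking the name invariant of the MAM, obtained exactly as \reflemma{micro-am-name-invariant} is obtained for the Micro AM. Consider a $\tomachvar$ step inside $\exectwo$ that copies a code $\code$ out of an entry $\esub\var\code$, splitting the environment as $\genvtwo \cons \esub\var\code \cons \genvthree$. Before the next $\tomachvar$ step of $\exectwo$, the code $\rename\code$ has been shrunk by zero or more $\tomachappl$ steps down to the head of its left spine, which---for a $\tomachvar$ step to apply next---must be a variable $\vartwo$; this $\vartwo$ is free in $\rename\code$, hence free in $\code$, and by the name invariant such a variable cannot be bound by an entry of $\genvtwo \cons \esub\var\code$, so the next $\tomachvar$ step reads from an entry strictly to the right inside $\genv$. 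Since neither $\tomachappl$ nor $\tomachvar$ modifies the environment, the $\tomachvar$ steps of $\exectwo$ march strictly rightwards through the fixed $\genv$, whence $\sizevar\exectwo \leq \size\genv$. The identity $\size\genv = \sizebeta\exec$ is then immediate: $\compil{\tm_0}$ has empty environment, each $\tomachrbeta$ appends exactly one entry, and $\tomachappl$ and $\tomachvar$ leave the environment untouched.

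For the second point I would decompose $\exec$ along its $\beta$-transitions as $\exec = \exec_0 \tomachrbeta \exec_1 \tomachrbeta \cdots \tomachrbeta \exec_b$ with $b = \sizebeta\exec$ and each $\exec_i$ made only of $\tomachappl$ and $\tomachvar$ steps. Each $\exec_i$ starts from a reachable state whose prefix from $\compil{\tm_0}$ contains exactly $i$ $\beta$-transitions, so by the first point $\sizevar{\exec_i} \leq i$, and summing gives $\sizevar\exec = \sum_{i=0}^{b} \sizevar{\exec_i} \leq \sum_{i=0}^{b} i = \frac{b(b+1)}{2} \leq b^2$.

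For the third point the plan is to use the size $\size\code$ of the current code as a termination measure: $\tomachrbeta$ turns a code $\la\var\code$ into $\code$ and $\tomachappl$ turns $\code\codetwo$ into $\code$, so both strictly decrease $\size\code$, while the only transition that might increase it, $\tomachvar$, is absent from $\exectwo$; since the starting state $\state$ is reachable, the subterm invariant (\reflemma{subterm}) bounds its code size by $\size{\tm_0}$, and a strictly decreasing sequence of naturals bounded above by $\size{\tm_0}$ has length at most $\size{\tm_0}$. For the fourth point I would cut $\exec$ at its $\sizevar\exec$ occurrences of $\tomachvar$, leaving at most $\sizevar\exec + 1$ maximal blocks of $\tomachrbeta$/$\tomachappl$ steps, each starting from a reachable state, hence by the third point each of length---and in particular each with at most---$\size{\tm_0}$ many $\tomachappl$ steps; summing over the blocks gives $\sizeappl\exec \leq \size{\tm_0}\cdot(\sizevar\exec + 1)$, and inserting the second point yields the stated $\size{\tm_0}\cdot(\sizebeta\exec^2 + 1)$.

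The real content sits in the first point; the others are bookkeeping on top of it and of the subterm invariant. The delicate part there is checking that the $\tomachappl$ steps interleaved with the $\tomachvar$ steps do not break the ``march strictly to the right in the environment'' argument---and they do not, precisely because they only strip arguments off the left spine of the freshly copied code without ever touching the environment, so the head variable consulted by the next $\tomachvar$ is still free in that copy and therefore, by the name invariant, bound only further to the right.
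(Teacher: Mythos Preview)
Your proof is correct and follows essentially the same approach as the paper: the same name-invariant argument for Point~1, the same local-to-global summation for Point~2, the same code-size termination measure plus subterm invariant for Point~3, and the same block decomposition for Point~4. Your decomposition in Point~2 (cutting at each individual $\tomachrbeta$ to get $\sizevar{\exec_i}\leq i$ and summing $\sum_{i=0}^{b} i$) is slightly cleaner than the paper's phase-based bookkeeping and even yields the sharper constant $b(b+1)/2$, but the underlying idea is identical.
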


\proof\hfill
  \begin{enumerate}
     \item Reasoning along the lines of \reflemma{exp-trans-terminate} one obtains that $\tomachvar$ transitions in $\exectwo$ have to use entries of $\genv$ from left to right ($\tomachappl$ and $\tomachvar$ do not modify $\genv$), and so  $\sizevar\exectwo \leq \size\genv$. Now, $\size\genv$ is exactly $\sizebeta\exec$, because the only transition extending $\genv$, and of exactly one entry, is $\tomachrbeta$. 
     
     \item The fact that a linear local bound induces a quadratic global bound is a standard reasoning. We spell it out to help the unacquainted reader. The execution $\exec$ alternates phases of $\beta$-transitions and phases of overhead transitions, \ie it has the shape:
  $$\compil{\tm_0} = \state_1 \tomachrbeta^* \statetwo_1 \tomachhole{\applsym,\varsym}^* 
		     \state_2 \tomachrbeta^* \statetwo_2 \tomachhole{\applsym,\varsym}^*   
	\ldots	     \state_k \tomachrbeta^* \statetwo_k \tomachhole{\applsym,\varsym}^* \state$$
Let $a_i$ be the length of the segment $\state_i \tomachrbeta^* \statetwo_i$ and $b_i$ be the number of $\tomachvar$ transitions in the segment $\statetwo_i \tomachhole{\applsym,\varsym}^* \state_{i+1}$, for $i = 1,\ldots, k$.
  By \refpoint{mam-trans-bound-exp-local}, we obtain $b_i\leq \sum_{j=1}^i
  a_j$. Then $\sizevar\exec=\sum_{i=1}^k b_i\leq\sum_{i=1}^k \sum_{j=1}^i a_j$. 
  Note that $\sum_{j=1}^i a_j\leq \sum_{j=1}^k a_j=\sizebeta\exec$ and
  $k \leq \sizebeta\exec$. So
$\sizevar\exec\leq \sum_{i=1}^k\sum_{j=1}^i a_j \leq \sum_{i=1}^k\sizebeta\exec \leq \sizebeta\exec^2$.

\item 	The length of $\exectwo$ is bound by the size of the code in the state $\state$ because $\tomachhole{\rootbetasym,\applsym}$ strictly decreases the size of the code, that in turn is bound by the size $\size{\tm_0}$ of the initial term by the subterm invariant (\reflemma{subterm}). 

    \item The execution $\exec$ alternates phases of $\tomachrbeta$ and $\tomachappl$ transitions and phases of $\tomachvar$ transitions,  \ie it has the shape:
  $$\compil{\tm_0} = \state_1 \tomachhole{\rootbetasym,\applsym}^* \statetwo_1 \tomachvar^* 
		     \state_2 \tomachhole{\rootbetasym,\applsym}^* \statetwo_2 \tomachvar^*   
	\ldots	     \state_k \tomachhole{\rootbetasym,\applsym}^* \statetwo_k \tomachvar^* \tomachhole{\rootbetasym,\applsym}^* \state$$
	
By \refpoint{mam-trans-bound-com-local} the length of the segments $\state_i \tomachhole{\rootbetasym,\applsym}^* \statetwo_i$ is bound by the size $\size{\tm_0}$ of the initial term. The code may grow, instead, with $\tomachvar$ transitions. So $\sizeappl\exec$ is bound by $\size{\tm_0}$ times the number $\sizevar\exec$ of micro-substitution transitions, plus $\size{\tm_0}$ once more, because at the beginning there might be $\tomachhole{\rootbetasym,\applsym}$ transitions before any $\tomachvar$ transition---in symbols, $\sizeappl\exec \leq \size{\tm_0} \cdot (\sizevar\exec +1)$. Finally, $\size{\tm_0} \cdot (\sizevar\exec +1) \leq \size{\tm_0} \cdot (\sizebeta\exec^2 + 1)$ by \refpoint{mam-trans-bound-exp-global}.\qed
  \end{enumerate}
  
  \paragraph{Cost of Single Transitions.} To estimate the cost of concretely implementing single transitions we need to make some hypotheses on how the MAM is going to be itself implemented on RAM:
  
  \begin{enumerate}  
    \item \emph{Codes, Variable (Occurrences), and Environment Entries}: abstractions and applications are constructors with pointers to subterms, a variable is a memory location, a variable occurrence is a reference to that location, and an environment entry $\esub\var\code$ is the fact that the location associated to $\var$ contains (the topmost constructor of) $\code$.    
    \item \emph{Random Access to Global Environments}: the environment $\genv$ of the MAM can be accessed in constant time (in $\tomachvar$) by just following the reference given by the variable occurrence under evaluation, with no need to access $\genv$ sequentially, thus ignoring its list structure.
  \end{enumerate}
  
  It is now possible to bound the cost of single transitions. Note that the case of $\tomachvar$ transitions relies on the subterm invariant.
  
\begin{lemma}
\label{l:mam-single-trans-cost}
  Let $\exec: \compil{\tm_0} \tomachhole{MAM} \state$ be a MAM execution. Then:
  \begin{enumerate}
    \item Each $\tomachappl$ transition in $\exec$ has constant cost;
    \item Each $\tomachrbeta$ transition in $\exec$ has constant cost;
    \item Each $\tomachvar$ transition in $\exec$ has cost bounded by the size $\size{\tm_0}$ of the initial term.
  \end{enumerate}
\end{lemma}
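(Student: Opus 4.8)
The plan is to argue transition by transition, using only the two implementation hypotheses stated just above the lemma (codes are constructors with pointers to their subterms, variables are memory locations, variable occurrences are references, environment entries record the content of a location, and the global environment admits random access) together with the subterm invariant (\reflemma{subterm}). The two constant-cost cases are routine bookkeeping; the only case with real content is $\tomachvar$, where one must bound the size of the code being copied.

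For the $\tomachappl$ transition $\code\codetwo,\stack,\genv \tomachappl \code,\codetwo\cons\stack,\genv$: reading the topmost constructor of the current code reveals an application and yields in constant time the two pointers to $\code$ and to $\codetwo$; pushing $\codetwo$ on $\stack$ is a single pointer operation; resetting the current code to $\code$ is immediate. Hence the cost is constant. For the $\tomachrbeta$ transition $\la\var\code,\codetwo\cons\stack,\genv \tomachrbeta \code,\stack,\esub\var\codetwo\cons\genv$: the topmost constructor is an abstraction, from which one reads the location $\var$ and the pointer to $\code$; popping $\codetwo$ off $\stack$ is a single pointer operation; and creating the new entry $\esub\var\codetwo$ means, by the first hypothesis, writing (the topmost constructor of) $\codetwo$ into the location $\var$, a constant-time write---and $\var$ is fresh by well-namedness of codes, so nothing is overwritten. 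Hence the cost is constant.

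The interesting case is $\var,\stack,\genv\cons\esub\var\code\cons\genvtwo \tomachvar \rename\code,\stack,\genv\cons\esub\var\code\cons\genvtwo$. By the random-access hypothesis, following the reference carried by the variable occurrence $\var$ retrieves the entry $\esub\var\code$, hence a pointer to $\code$, in constant time, with no need to scan $\genv$. What remains is to build the freshly renamed copy $\rename\code$: this requires traversing the whole term structure of $\code$, allocating a new node per constructor and a fresh location per binder, so it costs $O(\size\code)$. The point is then to bound $\size\code$ uniformly, and here the subterm invariant (\reflemma{subterm}) applies: $\code$ is a subterm of the initial term $\tm_0$, so $\size\code \leq \size{\tm_0}$, and the transition costs $O(\size{\tm_0})$.

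The main obstacle---really the only one---is thus the $\tomachvar$ case, and it is entirely a matter of combining two facts already in hand: that random access to the environment (a deliberate design choice, hypothesis 2) makes the copy-and-rename step the dominant one, and that the subterm invariant makes the cost of that step independent of the length of the execution. Without the subterm invariant the copied code could grow along the run---this is precisely size-explosion---and the uniform bound would fail; this is exactly why the statement flags that the $\tomachvar$ case relies on it.
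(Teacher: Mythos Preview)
Your proposal is correct and follows essentially the same approach as the paper's own proof: constant-time pointer manipulations for $\tomachappl$ and $\tomachrbeta$, and for $\tomachvar$ constant-time random access to the environment followed by a copy-and-rename whose cost is $O(\size\code)$, bounded by $\size{\tm_0}$ via the subterm invariant. The paper additionally remarks that the naive renaming algorithm (carrying a list of bound variables) is quadratic but can be improved to linear; you simply assert the linear bound, which is fine at this level of detail.
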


\begin{proof}
  According to our hypothesis on the concrete implementation of the MAM, $\tomachappl$ just moves the pointer to the current code on the left subterm of the application and pushes the pointer to the right subterm on the stack---evidently constant time. Similarly for $\tomachrbeta$. For $\tomachvar$, the environment entry $\esub\var\code$ is accessed in constant time by hypothesis, but $\code$ has to be $\alpha$-renamed, \ie copied. It is not hard to see that this can be done in time linear in $\size\code$ (the naive algorithm for copying carries around a list of variables, and it is quadratic, but it can be easily improved to be linear) that by the subterm invariant (\reflemma{subterm}) is bound by the size $\size{\tm_0}$ of the initial term.
\end{proof}

\paragraph{Complexity of the Overhead.} By composing the analysis of the number of transitions (\reflemma{mam-trans-bound}) with the analysis of the cost of single transitions (\reflemma{mam-single-trans-cost}) we obtain the main result of the paper.

\begin{theorem}[The MAM is Reasonable]
\label{thm:mam-reasonable} 
  Let $\exec: \compil{\tm_0} \tomachhole{MAM} \state$ be a MAM execution. Then:
  \begin{enumerate}
    \item \label{p:mam-reasonable-com} $\tomachappl$ transitions in $\exec$ cost all together $O(\size{\tm_0} \cdot (\sizebeta\exec^2 + 1))$;
    \item $\tomachrbeta$ transitions in $\exec$ cost all together $O(\sizebeta\exec)$;
    \item $\tomachvar$ transitions in $\exec$ cost all together $O(\size{\tm_0} \cdot (\sizebeta\exec^2 + 1))$;
  \end{enumerate}
  Then $\exec$ can be implemented on RAM with cost $O(\size{\tm_0} \cdot (\sizebeta\exec^2 + 1))$, \ie the MAM is a reasonable implementation of the weak head strategy.
\end{theorem}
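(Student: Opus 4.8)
The plan is to obtain the statement purely by \emph{composition}, following the third step of the recipe for complexity analyses: multiply the bound on the number of transitions of each kind (\reflemma{mam-trans-bound}) by the bound on the cost of implementing a single transition of that kind (\reflemma{mam-single-trans-cost}), and then sum the three contributions. So I would first dispatch the three itemized bounds and then assemble the global one.

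For the $\tomachappl$ transitions: by the first item of \reflemma{mam-single-trans-cost} each is implemented in constant time, and by \reflemmap{mam-trans-bound}{com-global} there are at most $\size{\tm_0}\cdot(\sizebeta\exec^2+1)$ of them in $\exec$; the product gives total cost $O(\size{\tm_0}\cdot(\sizebeta\exec^2+1))$. For the $\tomachrbeta$ transitions: each has constant cost (second item of \reflemma{mam-single-trans-cost}) and there are exactly $\sizebeta\exec$ of them by definition of $\sizebeta\exec$ (equivalently, by the $\beta$-matching requirement of \refdef{implem}), hence total cost $O(\sizebeta\exec)$. For the $\tomachvar$ transitions, which are the only non-constant ones: each is implemented in time $O(\size{\tm_0})$ — the cost of $\alpha$-renaming the copied code, which the subterm invariant (\reflemma{subterm}) keeps linear in $\size{\tm_0}$, as recorded in the third item of \reflemma{mam-single-trans-cost} — and by \reflemmap{mam-trans-bound}{exp-global} there are at most $\sizebeta\exec^2$ of them; the product gives $O(\size{\tm_0}\cdot\sizebeta\exec^2)$, which is subsumed by $O(\size{\tm_0}\cdot(\sizebeta\exec^2+1))$.

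To assemble the final bound, I would add the three contributions together with the $O(\size{\tm_0})$ cost of reading the input and building the initial state $\compil{\tm_0}$. Since $\size{\tm_0}\ge 1$ and $\sizebeta\exec\le\sizebeta\exec^2+1$, both $\sizebeta\exec$ and $\size{\tm_0}$ are dominated by $\size{\tm_0}\cdot(\sizebeta\exec^2+1)$, so the sum collapses to $O(\size{\tm_0}\cdot(\sizebeta\exec^2+1))$, which is polynomial in the input and in the strategy; hence the MAM is reasonable (indeed its overhead is quadratic in the strategy and linear in the input).

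I do not expect a genuine obstacle at this level, since all the mathematical content has already been spent on the two lemmas being composed — the quadratic global bound \reflemmap{mam-trans-bound}{exp-global}, derived from the linear \emph{local} bound by the standard double-summation estimate $\sizevar\exec\leq\sum_{i=1}^{k}\sum_{j=1}^{i}a_j$, and the subterm invariant \reflemma{subterm} that caps each copy at $\size{\tm_0}$. The only points requiring a little care are bookkeeping: absorbing the additive $+1$ and the $O(\sizebeta\exec)$ term into the dominant summand, and justifying the phrase ``implementable on RAM with that cost'' — which needs the observation that the \emph{total} number of transitions, $\sizeappl\exec+\sizebeta\exec+\sizevar\exec = O(\size{\tm_0}\cdot(\sizebeta\exec^2+1))$, is itself polynomially bounded, so that just walking through the execution never dominates the claimed overhead.
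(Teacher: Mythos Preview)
Your proposal is correct and follows exactly the paper's approach: the paper's proof is the one-line observation that the theorem is obtained by composing \reflemma{mam-trans-bound} with \reflemma{mam-single-trans-cost}, and you have simply spelled out that composition item by item. The extra bookkeeping you add (absorbing the $O(\sizebeta\exec)$ and $O(\size{\tm_0})$ terms, counting the total number of transitions) is sound and harmless, though the paper does not bother to make it explicit.
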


\paragraph{The Efficient MAM.} According to the terminology of \refsect{machines-intro}, the MAM is reasonable but it is not efficient because micro-substitution takes time quadratic in the length of the strategy. The quadratic factor comes from the fact that in the environment there can be growing chains of renamings, \ie of substitutions of variables for variables, see \cite{DBLP:conf/wollic/AccattoliC14} for more details on this issue. The MAM can actually be optimized easily, obtaining an efficient implementation, by replacing $\tomachrbeta$ with the following two $\beta$-transitions:
\begin{center}
$\begin{array}{|c|c|c||c||c|c|c|cccc}
		\hhline{-|-|-|-|-|-|-}
		\la{\var}\code & \vartwo\cons\stack & \genv
	  	&\tomachhole{r\beta_1} &
		\code\isub\var\vartwo & \stack & \genv
		\\
		\hhline{-|-|-||-||-|-|-}
		\la{\var}\code & \codetwo\cons\stack & \genv
	  	&\tomachhole{r\beta_2} &
		\code & \stack & \esub{\var}{\codetwo}\cons\genv &\mbox{if $\codetwo$ is not a variable}\\
		\hhline{-|-|-|-|-|-|-}
  	\end{array}$
\end{center}

\paragraph{Search is Linear and the Micro AM is Reasonable.} By \reflemma{mam-trans-bound} the cost of search in the MAM is linear in the number of transitions for implementing micro-substitution. This is an instance of a more general fact: \emph{search} turns out to always be bilinear (in the initial code and in the amount of micro-substitutions). There are two consequences of this general fact. First, it can be turned into a design principle for abstract machines---search \emph{has to be bilinear}, otherwise there is something wrong in the design of the machine. Second, search is somewhat negligible for complexity analyses. 

The Micro AM can be seen as the MAM up to search. In particular, it satisfies a subterm invariant and thus circumvents size-explosion. The Micro AM is however quite less efficient, because at each step it has to search the redex from scratch. An easy but omitted analisys shows that its overhead is nonetheless polynomial. Therefore, it makes sense to consider very abstract machines as the Micro AM that omit search. In fact, they already exist, in disguise, as strategies in the \emph{linear substitution calculus} \cite{DBLP:conf/rta/Accattoli12,DBLP:conf/popl/AccattoliBKL14}, a recent approach to explicit substitutions modeling exactly micro-substitution without search (the traditional approach to explicit substitutions instead models both micro-substitution and search) and they were used for the first proof that a strong strategy (the leftmost-outermost one) is reasonable \cite{DBLP:conf/csl/AccattoliL14}.

\paragraph{The Searching AM is Unreasonable.} It is not hard to see that the Searching AM is unreasonable. Actually, the number of transitions is reasonable. The projection of MAM executions on Searching AM executions, indeed, shows that the number of searching transitions of the Searching AM is reasonable. It is the cost of single transitions that becomes unreasonable. In fact, the Searching AM does not have a subterm invariant, because it rests on meta-level substitution, and the size of the terms duplicated by the $\tomachrbeta$ transition can explode (it is enough to consider the size-exploding family of \refprop{abs-size-explosion}).

The moral is that micro-substitution is more fundamental than search. While the cost of search can be expressed in terms of the cost of micro-substitution, the converse is in fact not possible.

\section{Names: Krivine Abstract Machine}
\label{sect:names}
\paragraph{Accounting for Names.} In the study presented so far we repeatedly took names seriously, by distinguishing between terms and  codes, by asking that initial codes are well-named, and by proving an invariant about names (\reflemma{micro-am-name-invariant}). The process of $\alpha$-renaming however has not been made explicit, the machines we presented rather rely on a meta-level renaming, used as a black box.

The majority of the literature on abstract machines, instead, pays more attention to $\alpha$-equivalence, or rather to how to avoid it. We distinguish two levels:
\begin{enumerate}
  \item \emph{Removal of on-the-fly $\alpha$-equivalence}: in these cases the machine works on terms with variable names but it is designed in order to implement evaluation without ever $\alpha$-renaming. Technically, the global environment of the MAM is replaced by many local environments, each one for every piece of code in the machine. The machine becomes more complex, in particular the non-trivial concept of closure (to be introduced shortly) is necessary.
  \item \emph{Removal of names}: terms are represented using de Bruijn indexes (or de Bruijn levels), removing the problem of $\alpha$-equivalence altogether but sacrificing the readability of the machine and reducing its abstract character. Usually this level is built on top of the previous one.
\end{enumerate}
We are now going to introduce Krivine Abstract Machine (keeping names, so at the first level), yet another implementation of the weak head strategy. Essentially, it is a version of the MAM without on-the-fly $\alpha$-equivalence. The complexity analysis will show that it has exactly the same complexity of the MAM. The further removal of names is only (anti)cosmetic---the complexity is not affected either. Consequently, the task of accounting for names is---as for search---negligible for complexity analyses.

  \begin{figure}
  \begin{center}
\begin{tabular}{c}
\begin{tabular}{c|c}
   $\begin{array}{rrclc}
     \mbox{Local Env.} & \env & \defeq & \stempty \mid \esub\var\clos \cons \env\\
     \mbox{Closures} & \env & \defeq & (\code, \env)\\
      \mbox{Stacks} & \stack & \defeq & \stempty \mid \clos \cons \stack\\
      \mbox{States} & \state & \defeq & \statekamclos \clos \stack \\
     \mbox{Compilation} & \compil\tm & \defeq & \statekamclos {(\code, \stempty)} \stempty \\
     
         \end{array}$
  &
  $\begin{array}{rrclc}
     
     \mbox{Closure Decoding} & \decode{ (\code, \stempty) } & \defeq & \tm\\
      & \decode{ (\code, \esub\var\clos \cons \env ) } & \defeq & \decode{ (\code \isub\var{\decode\clos}, \env ) }\\      
      \mbox{State Decoding} & \decode{ \statekamclos \clos \stempty } & \defeq & \decode\clos\\
      & \decode{ \statekamclos \clos {\clostwo \cons \stack}  } & \defeq & \decode{ \statekamclos {(\decode\clos \decode\clostwo, \stempty)} \stack}  
         \end{array}$
    \end{tabular}
\\\\
  	$\begin{array}{|c|c|c||c||c|c|c|cccc}
   \hhline{-|-|-|-|-|-|-}
		Code & Loc Env & Stack &  Trans & Code & Loc Env & Stack
\\
		\hhline{=|=|=||=||=|=|=}			
				\code\codetwo&\env&\stack
				&\tomachappl&
				\code&\env&(\codetwo,\env)\cons\stack
				\\ \hhline{-|-|-||-||-|-|-}
				\l\var.\code&\env&\clos\cons\stack
				&\tomachrbeta&
				\code&\esub\var\clos\cons\env&\stack
		\\ \hhline{-|-|-||-||-|-|-}
				\var&\env&\stack
				&\tomachvar&
				\code&\envtwo&\stack &\mbox{with $\env(\var)=(\code,\envtwo)$}
		\\				
		\hhline{-|-|-|-|-|-|-}
  	\end{array}$
  	   \end{tabular}
  \end{center}
  \caption{Krivine Abstract Machine (KAM).}
  \label{fig:kam}
\end{figure}

\paragraph{Krivine Abstract Machine.} The machine is in \reffig{kam}. It relies on the mutually inductively defined concepts of \emph{local environment}, that is a list of closures, and \emph{closure}, that is a list of pairs of a code and a local environment. A state is a \emph{pair} of a closure and a stack, but in the description of the transitions we write it as a \emph{triple}, by spelling out the two components of the closure. Let us explain the name \emph{closure}: usually, machines are executed on closed terms, and then a closure decodes indeed to a closed term. While essential in the study of call-by-value or call-by-need strategies, for the weak head (call-by-name) strategy the closed hypothesis is unnecessary, that is why we do not deal with it---so a closure here does not necessarily decode to a closed term. Two remarks:
\begin{enumerate}
  \item \emph{Garbage Collection}: transition $\tomachvar$, beyond implementing micro-substitution, also accounts for some garbage collection, as it throws away the local environment $\env$ associated to the replaced variable $\var$. The MAM simply ignores garbage collection. For time analyses garbage collection can indeed be safely ignored, while it is clearly essential for space (both the KAM and the MAM are however desperately inefficient with respect to space).
  
  \item \emph{No $\alpha$-Renaming and the Length of Local Environments}: names are never renamed. The initial code, as usual, is assumed to be well-named. Then the entries of a same local environment are all on distinguished names (formally, a name invariant holds). Then the length of a local environment $\env$ is bound by the number of names in the initial term, that is, by the size of the initial term (formally, $\size\env \leq \size{\tm_0}$). This  essential quantitative invariant is used in analisys of the next paragraph.
\end{enumerate}

\paragraph{Implementation and Complexity Analysis.} The proof that the KAM implements the weak head strategy follows the recipe for these proofs and it is omitted. For the complexity analysis, the bound of the number of transitions can be shown to be exactly as for the MAM. A direct proof is not so simple, because the bound on $\tomachvar$ transitions cannot exploit the size of the global environment. The bound can be obtained by relating the KAM with the Searching AM (for which the exact same bound of the MAM holds), or by considering the \emph{depth} (\ie maximum nesting) of local environments. The proof is omitted. 

The interesting part of the analysis is rather the study of the cost of single transitions. As for the MAM, we need to spell out the hypotheses on how the KAM is concretely implemented on RAM. Variables cannot be implemented with pointers, because the same variable name can be associated to different codes in different local environments. So they have to simply be numbers. Then there are two choices for the representation of environments, either they are represented as lists or as arrays. In both cases $\tomachrbeta$ can be implemented in constant time. For the other transitions:
\begin{enumerate}
  \item \emph{Environments as Arrays}: we mentioned that there is a bound on the length of local environments ($\size\env \leq \size{\tm_0}$) so that arrays can be used. The choice allows to implement $\tomachvar$ in constant time, because $\env$ can be accessed directly at the position described by the number given by $\var$. Transition $\tomachappl$ however requires to duplicate $\env$, and this is necessary because the two copies might later on be modified differently. So the cost of a $\tomachappl$ transition becomes linear in $\size{\tm_0}$ and $\tomachappl$ transitions all together cost $O(\size{\tm_0}^2 \cdot (\sizebeta\exec^2 + 1))$, that also becomes the complexity of the whole overhead of the KAM. This is worse than the MAM.
  
  \item \emph{Environments as Lists}: implementing local environments as lists provides sharing of environments, overcoming the problems of arrays. With lists, transition $\tomachappl$ becomes constant time, as for the MAM, because the copy of $\env$ now is simply the copy of a pointer. The trick is that the two copies of the environment can only be extended differently \emph{on the head}, so that the tail of the list can be shared. Transition $\tomachvar$ however now needs to access $\env$ sequentially, and so it costs $\size{\tm_0}$ as for the MAM. Thus globally we obtain the same overhead of the MAM.
\end{enumerate}

Summing up, \emph{names} can be pushed at the meta-level (as in the MAM) without affecting the complexity of the overhead. Thus, \emph{names} are even less relevant than \emph{search} at the level of complexity. The moral of this tutorial then is that \emph{substitution} is the crucial aspect for the complexity of abstract machines.

\bibliographystyle{eptcs}
\bibliography{\macrospath/biblio}
\end{document}